\numberwithin{equation}{section}
\renewcommand{\phi}{\varphi}
\newtheorem{theorem}{Theorem}
\newtheorem{proposition}{Proposition}
\begin{document}

\title[Exact solution of the six-vertex model]{Exact solution of the six-vertex model with domain wall boundary conditions. Critical line between disordered and antiferroelectric phases}

\author{Pavel Bleher}
\address{Department of Mathematical Sciences,
Indiana University-Purdue University Indianapolis,
402 N. Blackford St., Indianapolis, IN 46202, U.S.A.}
\email{bleher@math.iupui.edu}

\author{Thomas Bothner}
\address{Department of Mathematical Sciences,
Indiana University-Purdue University Indianapolis,
402 N. Blackford St., Indianapolis, IN 46202, U.S.A.}
\email{tbothner@iupui.edu}

\thanks{The first author is supported in part
by the National Science Foundation (NSF) Grant DMS-0969254.
A part of this work was done while the first author was visiting the Institute for Mathematical Sciences, National University of Singapore in 2012.}

\date{\today}


\begin{abstract}
In the present article we obtain the large $N$ asymptotics of the partition function $Z_N$ of the six-vertex model with domain wall boundary conditions on the critical line between the disordered and antiferroelectric phases. Using the weights $a=1-x,b=1+x,c=2,|x|<1$, we prove that, as $N\rightarrow\infty$, $Z_N=CF^{N^2}N^{1/12}\left(1+O(N^{-1})\right)$, where $F$ is given by an explicit expression in $x$ and the $x$-dependency in $C$ is determined. This result reproduces and improves the one given in the physics literature by Bogoliubov, Kitaev and Zvonarev \cite{BKZ}. Furthermore, we prove that the free energy exhibits an infinite order phase transition between the disordered and antiferroelectric phases. Our proofs are based on the large $N$ asymptotics for the underlying orthogonal polynomials which involve a non-analytical weight function, the Deift-Zhou nonlinear steepest descent method to the corresponding Riemann-Hilbert problem, and the Toda equation for the tau-function.
\end{abstract}
\maketitle


\section{Introduction and statement of the main result} We begin with the description of the model under consideration.
Given a square lattice in $\mathbb{Z}^2$ with $N^2$ vertices, we assign arrows along each edge obeying the following rule: At every vertex two arrows point in and two arrows point out. Such a rule is called the {\it ice-rule} and it only admits six possible arrow configurations, see Figure \ref{fig1}, hence the name of the model: the {\it six-vertex model} or the {\it model of two-dimensional ice}.
\begin{figure}[tbh]
  \begin{center}
  \includegraphics[width=8cm,height=6cm]{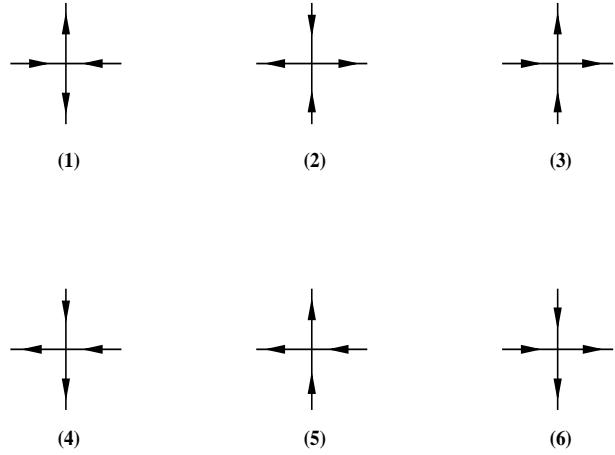}
  \end{center}
  \caption{The six allowed arrow configurations at a vertex}
  \label{fig1}
\end{figure}

\smallskip

On the boundary of the lattice we impose the {\it domain wall boundary conditions} (DWBC), that is all arrows on the top and bottom side of the lattice are directed inside the lattice and all arrows on the left and right side point outside, see Figure \ref{fig2} for a possible arrow configuration with DWBC.
\begin{figure}[tbh]
  \begin{center}
  \includegraphics[width=7cm,height=7cm]{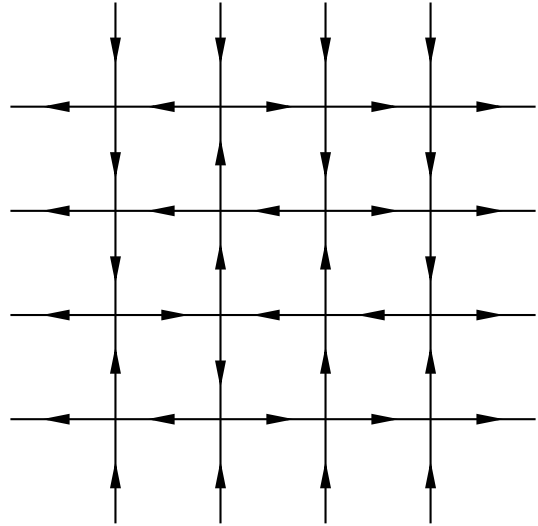}
  \end{center}
  \caption{An example of a possible $4\times 4$ configuration with DWBC}
  \label{fig2}
\end{figure}
\bigskip

To each of the six possible arrow configurations $(i)$ we assign a positive real-valued Boltzmann weight $w_i,i=1,\ldots,6$ and define the partition function $Z_N$ of the model as
\begin{equation*}
	Z_N=\sum_{\textnormal{configurations}}\prod_{i=1}^6w_i^{n_i}
\end{equation*} 
with $n_i$ denoting the number of vertices with arrow configuration $(i)$. By definition, $Z_N$ depends on six parameters: 
the weights $w_i$. However recalling some conservation laws (see for instance \cite{AR}, \cite{BL1} or \cite{FS}) 
we can reduce the number of parameters to two, namely,
\begin{equation}\label{conserv1}
	Z_N(w_1,w_2,w_3,w_4,w_5,w_6) = \bigg(\frac{w_5}{w_6}\bigg)^{N/2}Z_N(a,a,b,b,c,c),
\end{equation}
where
\begin{equation*}
	a=\sqrt{w_1w_2},\ \ b=\sqrt{w_3w_4},\ \ c=\sqrt{w_5w_6}\,,
\end{equation*}
and furthermore
\begin{equation}\label{conserv2}
	Z_N(a,a,b,b,c,c)=c^{N^2}Z_N\bigg(\frac{a}{c},\frac{a}{c},\frac{b}{c},\frac{b}{c},1,1\bigg).
\end{equation}
Thus, a general weight reduces to the two parameters $\frac{a}{c},\frac{b}{c}$. In order to study the phase diagram (see Figure \ref{fig3}) of the model, we introduce the parameter
\begin{equation*}
	\Delta = \frac{a^2+b^2-c^2}{2ab}\,.
\end{equation*}
\begin{figure}[tbh]
  \begin{center}
  \psfragscanon
  \psfrag{1}{\footnotesize{$0$}}
  \psfrag{2}{\footnotesize{$1$}}
  \psfrag{3}{\footnotesize{$1$}}
  \psfrag{4}{\footnotesize{$\frac{a}{c}$}}
  \psfrag{5}{\footnotesize{$\frac{b}{c}$}}
  \psfrag{6}{\footnotesize{$\textnormal{(AF)}$}}
  \psfrag{7}{\footnotesize{$\textnormal{(F)}$}}
  \psfrag{8}{\footnotesize{$\textnormal{(F)}$}}
  \psfrag{9}{\footnotesize{$\textnormal{(D)}$}}
  \includegraphics[width=6.5cm,height=6cm]{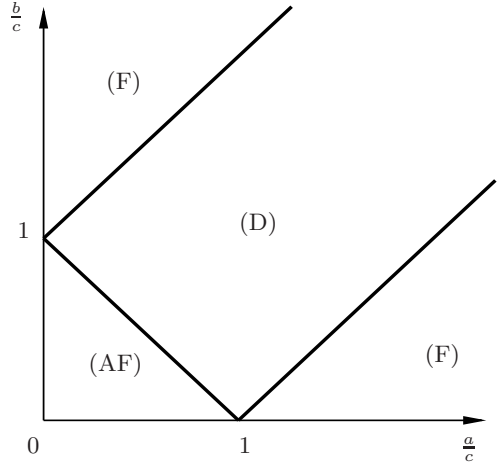}
  \end{center}
  \caption{The phase diagrom of the six-vertex model, with (F), (AF) and (D) denoting the relevant phases}
  \label{fig3}
\end{figure}
We distinguish the following three physical phase regions: the antiferroelectric phase (AF), $\Delta<-1$; the disordered phase (D), $-1<\Delta<1$; and, the ferroelectric phase (F), $\Delta>1$. In all phases the weights are usually parameterized in the following way: For the ferroelectric phase,
\begin{equation}\label{Fpara}
	(\textnormal{F})\hspace{0.5cm}a=\sinh(t-\gamma),\hspace{0.5cm} b=\sinh(t+\gamma),\hspace{0.5cm} c=\sinh(2\gamma),\hspace{0.75cm} 0<|\gamma|<t,
\end{equation}
for the antiferroelectric phase,
\begin{equation}\label{AFpara}
	(\textnormal{AF})\hspace{0.5cm}a=\sinh(\gamma-t),\hspace{0.5cm} b=\sinh(\gamma+t),\hspace{0.5cm} c=\sinh(2\gamma),\hspace{0.75cm} 0\leq|t|<\gamma
\end{equation}
and for the disordered phase,
\begin{equation}\label{Dpara}
	(\textnormal{D})\hspace{0.5cm}a=\sin(\gamma-t),\hspace{0.5cm} b=\sin(\gamma+t),\hspace{0.5cm} c=\sin(2\gamma),\hspace{0.75cm} 0\leq|t|<\gamma<\frac{\pi}{2}\,.
\end{equation}
Here we discuss the critical line between the disordered and antiferroelectric phase, hence
\begin{equation*}
	\Delta=-1
\end{equation*}
which corresponds to
\begin{equation*}
	\frac{a}{c}+\frac{b}{c}=1.
\end{equation*}
Instead of using the disordered phase parameterization \eqref{Dpara} in the limit $\gamma\rightarrow 0^+$, we choose the following parameterization for the weights $a,b$ and $c$:
\begin{equation}\label{abcpara}
	a=1-x,\ \ \ b=1+x,\ \ \ c=2,\hspace{0.5cm} x\in(-1,1).
\end{equation}
\smallskip

The above described six-vertex model with DWBC was introduced by Korepin in \cite{K} and further studied in \cite{I} and \cite{CIK}. Those works lead to a remarkable determinantal formula for the partition function with DWBC, which, with weights parameterized as
in \eqref{abcpara},  is
\begin{equation}\label{IKformula1}
	Z_N=\frac{(ab)^{N^2}}{(\prod_{k=0}^{N-1}k!)^2}\tau_N,
\end{equation}
where $\tau_N$ is the Hankel determinant,
\begin{equation}\label{IKformula2}
	\tau_N=\det\bigg(\frac{d^{i+j-2}}{dx^{i+j-2}}\phi(x)\bigg)_{i,j=1}^N,\hspace{0.5cm} \phi(x)=\frac{c}{ab}=\frac{2}{1-x^2}\,.
\end{equation}
The determinantal formula implies that $\tau_N$ solves the Toda equation,
\begin{equation}\label{toda}
	\tau_N\tau_N''-\big(\tau_N'\big)^2 = \tau_{N+1}\tau_{N-1},\ \ n\geq 1,\hspace{0.5cm} (')=\frac{\partial}{\partial x}\,.
\end{equation}
\smallskip

As was noticed by Zinn-Justin \cite{Z}, the Hankel determinant $\tau_N$ can be expressed
in terms of the partition function of a random matrix model, and then it can be reduced to orthogonal polynomials. 
On the critical line under consideration, the expression in terms of orthogonal polynomials can be obtained as follows. 
We write $\phi(x)$ as the Laplace transform of a continuous measure,
\begin{equation*}
	\phi(x)=\frac{2}{(1-x)(1+x)} = \int\limits_{-\infty}^{\infty}e^{x\lambda}m(\lambda)d\lambda,\hspace{0.5cm}m(\lambda)=e^{-|\lambda|}
\end{equation*}
and define the potential $V(\lambda)=|\lambda|-x\lambda$. This implies for $\tau_N$
\begin{equation*}	\tau_N=\int\limits_{-\infty}^{\infty}\cdots\int\limits_{-\infty}^{\infty}e^{-\sum_{k=1}^NV(\lambda_i)}\bigg(\prod_{l=1}^{N}\lambda_l^{l-1}\bigg)\det\Big(\lambda_i^{j-1}\big)_{i,j=1}^Nd\lambda_1\cdots d\lambda_N\\
\end{equation*}
and thus for any permutation $\sigma\in S_N$ acting on $\{\lambda_i\}_{i=1}^N$
\begin{equation*}
	\tau_N=\int\limits_{-\infty}^{\infty}\cdots\int\limits_{-\infty}^{\infty}
	e^{-\sum_{k=1}^NV(\lambda_i)}\bigg(\prod_{l=1}^N\textnormal{sgn}(\sigma)\lambda_{\sigma(l)}^{l-1}\bigg)\det\Big(\lambda_i^{j-1}\big)_{i,j=1}^Nd\lambda_1\cdots d\lambda_N,
\end{equation*}
hence after summation and identification of a Vandermonde determinant
\begin{equation}\label{tau1}
	\tau_N=\frac{1}{N!}\int\limits_{-\infty}^{\infty}\cdots\int\limits_{-\infty}^{\infty}e^{-\sum_{k=1}^NV(\lambda_k)}\prod_{i<j}(\lambda_i-\lambda_j)^2d\lambda_1\cdots d\lambda_N.
\end{equation}
Up to the factor $\frac{1}{N!}$, the expression on the right is the partition function of eigenvalues in the random matrix model with 
the potential $V$.

Introduce now monic orthogonal polynomials $\{p_n(s)\}_{n\geq 0}$ with respect to the measure $d\mu(s)=e^{-V(s)}ds$ on the real line,
\begin{equation*}
	\int\limits_{-\infty}^{\infty}p_n(s)p_m(s)d\mu(s) = h_n\delta_{nm}\,.
\end{equation*}
Then, by  using the orthogonality condition, one can simplify \eqref{tau1} to
\begin{equation}\label{connection}
	\tau_N=\prod_{k=0}^{N-1}h_k.
\end{equation}
Thus, for the partition function, via \eqref{IKformula1}, we obtain the formula
\begin{equation}\label{IKZformula}
	Z_N = (ab)^{N^2}\prod_{k=0}^{N-1}\frac{h_k}{(k!)^2}.
\end{equation}
We will prove the following asymptotics of the normalizing constants $h_N$:

\begin{theorem}\label{theo1} As $N\rightarrow\infty$,
\begin{equation}\label{theoasy1}
	\frac{h_N}{(N!)^2} = \bigg(\frac{\pi}{2\cos\frac{\pi x}{2}}\bigg)^{2N+1}\Big(1+\frac{1}{12N}+\varepsilon_N\Big),
\end{equation}
where
\begin{eqnarray}\label{theoasyerror1}
	\varepsilon_N&=&\frac{(-1)^N\cos(\pi x\big(N+\frac{1}{2}\big))}{2N(\ln N)^2}\bigg\{1-\frac{2\ln\ln N}{\ln N}+\frac{1-2\gamma-4\ln 2-2\ln\big(\cos\frac{\pi x}{2}\big)}{\ln N}\nonumber\\
	&&+O\Big(\bigg(\frac{\ln\ln N}{\ln N}\bigg)^2\Big)\bigg\}+O\big(N^{-2}\big),
\end{eqnarray}
and the error terms in \eqref{theoasyerror1} are uniform on any compact subset of the set
\begin{equation}\label{excset}
	\{ x\in\mathbb{R}:\ -1<x<1\}.
\end{equation}
\end{theorem}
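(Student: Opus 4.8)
\bigskip
\noindent\textbf{Proof proposal.}
I would obtain \eqref{theoasy1}--\eqref{theoasyerror1} from a Deift--Zhou steepest descent analysis of the Riemann--Hilbert problem for the monic orthogonal polynomials of $d\mu(s)=e^{-V(s)}\,ds$, $V(s)=|s|-xs$, reading $h_N$ off a fixed entry of the $1/z$--coefficient of the solution at infinity. Because $V$ is positively homogeneous of degree one, the substitution $s=N\la$ turns the fixed weight into the varying weight $e^{-NV(\la)}$: denoting by $\tilde h_N$ the norming constant of the degree--$N$ monic polynomial for $e^{-NV(\la)}\,d\la$ one has $h_N=N^{2N+1}\tilde h_N$, hence by Stirling $\tfrac{h_N}{(N!)^2}=\tfrac{e^{2N}}{2\pi}\bigl(1-\tfrac1{6N}+O(N^{-2})\bigr)\tilde h_N$, so that Theorem~\ref{theo1} amounts to a sharp expansion of $\tilde h_N$.

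The first task is the equilibrium problem for $V$ on $\R$. Since $V$ is convex with a corner at its minimum $s=0$, the equilibrium measure $\mu_V$ is supported on a single interval $[\al(x),\be(x)]\ni 0$, vanishes like a square root at the soft edges $\al,\be$, and --- this is the decisive feature --- has a \emph{logarithmic singularity} at the origin; for $x=0$ one finds explicitly $[\al,\be]=[-\pi,\pi]$ with density $\tfrac1{2\pi^2}\ln\bigl(\tfrac{\pi+\sqrt{\pi^2-\la^2}}{\pi-\sqrt{\pi^2-\la^2}}\bigr)$, and a like closed form for $x\ne 0$, from which $\al,\be$ and the Euler--Lagrange constant $\ell(x)$ come out in terms of $\cos\tfrac{\pi x}{2}$. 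With the associated $g$--function I would perform the usual normalizations $Y\to T\to S$; the twist is that $e^{-V}$ is only piecewise analytic --- it extends as the entire function $e^{(1+x)z}$ over $[\al,0]$ and as $e^{-(1-x)z}$ over $[0,\be]$, but these continuations disagree off the real axis --- so the lenses over $[\al,0]$ and over $[0,\be]$ are opened separately, and the Szeg\"{o}--type function in the global parametrix $\mathcal N$ is assembled from those two one--sided weights. The local parametrices at $\al$ and $\be$ are the standard Airy ones.

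The crux is the local parametrix at $0$, which must absorb simultaneously the corner of $V$ and the logarithmic blow--up of the density. The relevant local variable is $\la\sim(N\ln N)^{-1}$ rather than $N^{-1}$, so that $\ln|\la|\sim-\ln N-\ln\ln N$ on that scale --- precisely the source of the hierarchy $1,\ \ln\ln N/\ln N,\ 1/\ln N,\ (\ln\ln N/\ln N)^2,\dots$ in \eqref{theoasyerror1}. I would build this parametrix by reducing the local model to the free (sine--kernel) case and then treating the deviation of the density from a constant and of $e^{-NV}$ from $1$ perturbatively in powers of $1/\ln N$, matching to $\mathcal N$ on a carefully chosen shrinking circle about $0$ on which $\mathcal N$ is itself only mildly (logarithmically) singular. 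The final small--norm step gives $R=Y\,(\text{parametrices})^{-1}=I+O(N^{-1})$, the $O(N^{-1})$ part being produced by the two Airy points --- which, combined with the Stirling factor above, is what yields the $\tfrac1{12N}$ in \eqref{theoasy1} --- together with an extra oscillatory contribution of size $(N(\ln N)^2)^{-1}$ carrying the factor $(-1)^N\cos\!\bigl(\pi x(N+\tfrac12)\bigr)$ from the mismatch at $0$. Propagating these through the expansion of $R$ at infinity and undoing the rescaling produces \eqref{theoasy1}--\eqref{theoasyerror1} with remainder $O(N^{-2})$; the Toda equation \eqref{toda}, applied to $\tau_N=\prod_{k=0}^{N-1}h_k$ through $(\ln\tau_N)''=h_N/h_{N-1}$, serves to fix the $x$--dependence of the subleading terms and to corroborate the $\tfrac1{12N}$ coefficient.

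The main obstacle I anticipate is exactly the construction and matching of the parametrix at the origin: producing a model Riemann--Hilbert problem that carries a logarithmically divergent equilibrium density next to a non--analytic weight, controlling it uniformly in $N$ on a shrinking neighbourhood of $0$ where $\mathcal N$ is itself singular, and expanding the matching error finely enough to recover the entire tower of $(\ln\ln N/\ln N)^k$ terms alongside the $O(N^{-2})$ remainder --- substantially subtler than the textbook Airy/Bessel local analysis.
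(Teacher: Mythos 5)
Your outline matches the paper up to and including the Airy parametrices at the soft edges $\al,\be$, and you have correctly identified the decisive scale $\la\sim(N\ln N)^{-1}$ at the origin and the resulting $\ln\ln N/\ln N$ hierarchy. But from that point on you depart from what the paper actually does, and the departure is precisely where your argument has a gap.

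The paper does \emph{not} build a local parametrix at the origin. After opening the lenses separately over $[\al,0]$ and $[0,\be]$, it leaves a residual jump on the vertical segment $(-i\varepsilon,i\varepsilon)$, with lower--triangular jump matrices of the form $j_k(z)=\pm 2i\,e^{-Nf_k(z)}\sin(Niz)$. An elementary optimization shows $\sup_{0\le y<\varepsilon}\,e^{\frac{2N}{\pi}y\ln y}\sin(Ny)=O(1/\ln N)$, so this jump is \emph{retained} in the ratio problem for $R$ and contributes to $\|G_R-I\|_{L^2\cap L^\infty}\le c/\ln N$. The small--norm estimate one obtains is thus only $\|R_--I\|_{L^2}=O(1/\ln N)$ --- not the $R=I+O(N^{-1})$ you assert. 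Your claimed $O(N^{-1})$ bound would require a parametrix at $0$ that matches the model to that accuracy, which you never construct; you only gesture at ``reducing to the sine--kernel case and perturbing in powers of $1/\ln N$,'' which, carried out, is in effect just the paper's small--norm iteration by another name, not a genuine local model. The explicit tower of constants in $\varepsilon_N$ --- the $1-2\gamma-4\ln 2-2\ln\cos\frac{\pi x}{2}$ and the oscillatory prefactor $(-1)^N\cos(\pi x(N+\frac12))$ --- is obtained in the paper by an entirely different mechanism: the contribution of the vertical segment to the $1/z$ coefficient of $R$ is computed as an explicit Laplace--type integral $A(N)$, handled by the change of variables $u=-\frac{2}{\pi}y\ln y+yh_1(y)+h_2(y)-h_2(0)$ followed by $s=Nu$, with the inverse $y(u)$ expanded in the slow variable $-\ln u$. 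Your proposal gives no mechanism to produce these constants.

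Two smaller points. First, the global model $M(z)$ in the paper carries no Szeg\H{o} function: after conjugation by $e^{Ng\sigma_3}$ the jump across $(\al,0)\cup(0,\be)$ is the constant matrix $\bigl(\begin{smallmatrix}0&1\\-1&0\end{smallmatrix}\bigr)$, so $M$ is just the standard $\de(z)^{-\sigma_3}$ conjugate; assembling a Szeg\H{o} function ``from the two one--sided weights'' is unnecessary and would obscure the residual jump at the origin that actually drives the analysis. Second, your Stirling bookkeeping and the attribution of the $\frac{1}{12N}$ to the Airy circles plus Stirling is consistent with the paper's computation, and the use of the Toda identity $(\ln\tau_N)''=h_N/h_{N-1}$ is indeed what the paper invokes later --- though to pin down the constant in $Z_N$ (Theorem~2), not to fix subleading terms of $h_N$ itself. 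In short, the plan breaks at the treatment of $z=0$: you need either to genuinely construct and match a local parametrix there (hard, and not sketched) or, as the paper does, to retain the vertical jump, accept the weaker $O(1/\ln N)$ small--norm bound, and compute the vertical--segment integral to the required precision.
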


The proof of Theorem \ref{theo1} is based on the Riemann-Hilbert approach for the potential $V$ with a singularity at the origin.
$V$ is close to the class of the Freud potentials considered in the paper of Kricherbauer and McLaughlin \cite{KM}. Our approach is somewhat different from \cite{KM}, and it gives a more detailed asymptotic formula for $h_N$, which is needed in the asymptotics
of $Z_N$.
Our second main result in the present paper is the asymptotics of $Z_N$.

\begin{theorem}\label{theo2} On the critical line between the disordered and antiferroelectric phase region with $-1<x<1$, as $N\rightarrow\infty$,
\begin{equation}\label{theoasy2}
	 Z_N= CF^{N^2}N^{\frac{1}{12}}\left(1+O\left(N^{-1}\right)\right),
\end{equation}
where
\begin{equation}\label{freeenergy}
	F=\frac{\pi(1-x^2)}{2\cos\frac{\pi x}{2}},\hspace{1cm} C=C_0\bigg(\cos\frac{\pi x}{2}\bigg)^{\frac{1}{12}}\,,
\end{equation}
and  $C_0>0$ is a universal constant. The error term in \eqref{theoasy2} is uniform on any compact subset of the set \eqref{excset}.
\end{theorem}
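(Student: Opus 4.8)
The plan is to read off the bulk of the asymptotics of $Z_N$ from Theorem~\ref{theo1} via the product formula \eqref{IKZformula}, and then to pin down the $x$-dependence of the multiplicative constant using the Toda equation \eqref{toda}. For the first step, by \eqref{IKZformula} $\log Z_N=N^2\log(ab)+\sum_{k=0}^{N-1}\log\frac{h_k}{(k!)^2}$, and by Theorem~\ref{theo1}, for large $k$,
\[
\log\frac{h_k}{(k!)^2}=(2k+1)\log\frac{\pi}{2\cos\frac{\pi x}2}+\log\Bigl(1+\tfrac1{12k}+\varepsilon_k\Bigr),\qquad \log\Bigl(1+\tfrac1{12k}+\varepsilon_k\Bigr)=\frac1{12k}+\varepsilon_k+O(k^{-2}).
\]
Since $\sum_{k=0}^{N-1}(2k+1)=N^2$ and $(ab)\cdot\frac{\pi}{2\cos\frac{\pi x}2}=\frac{\pi(1-x^2)}{2\cos\frac{\pi x}2}=F$ by \eqref{abcpara},\eqref{freeenergy}, while $\sum_k\varepsilon_k$ converges absolutely (its terms are $O(k^{-1}(\log k)^{-2})$ by \eqref{theoasyerror1}) and $\sum_{k<N}\tfrac1{12k}=\tfrac1{12}\log N+\mathrm{const}+O(N^{-1})$, summation gives, uniformly on compact subsets of $(-1,1)$,
\[
\log Z_N=N^2\log F+\tfrac1{12}\log N+D(x)+O(N^{-1}),\qquad D(x):=\lim_{N\to\infty}\Bigl(\log Z_N-N^2\log F-\tfrac1{12}\log N\Bigr),
\]
with $D$ a well-defined continuous function on $(-1,1)$. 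Equivalently $Z_N=C(x)F^{N^2}N^{1/12}(1+O(N^{-1}))$ with $C(x)=e^{D(x)}>0$; this is already \eqref{theoasy2}, and it remains to show $C(x)=C_0(\cos\frac{\pi x}2)^{1/12}$.

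For the second step, write $\tau_N=\frac{(\prod_{k=0}^{N-1}k!)^2}{(ab)^{N^2}}Z_N$ as in \eqref{IKformula1}, with $ab=1-x^2$; then \eqref{toda}, equivalently $\tau_N^2(\log\tau_N)''=\tau_{N+1}\tau_{N-1}$, reads
\[
(\log Z_N)''=\frac{N^2}{(1-x^2)^2}\cdot\frac{Z_{N+1}Z_{N-1}}{Z_N^2}+N^2\bigl(\log(1-x^2)\bigr)''.
\]
From \eqref{IKZformula} and Theorem~\ref{theo1}, $Z_{N+1}/Z_N=F^{\,2N+1}\bigl(1+\tfrac1{12N}+\varepsilon_N\bigr)$, so $Z_{N+1}Z_{N-1}/Z_N^2=F^2(1+\tfrac1{12N}+\varepsilon_N)(1+\tfrac1{12(N-1)}+\varepsilon_{N-1})^{-1}$. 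Using $\frac{F^2}{(1-x^2)^2}=\frac{\pi^2}4\sec^2\frac{\pi x}2$, the identity $(\log F)''=(\log(1-x^2))''+\frac{\pi^2}4\sec^2\frac{\pi x}2$, and $\frac{\pi^2N^2}4\sec^2\frac{\pi x}2\bigl(\tfrac1{12N}-\tfrac1{12(N-1)}\bigr)\to-\frac{\pi^2}{48}\sec^2\frac{\pi x}2$, one obtains, uniformly on compacts,
\[
(\log Z_N)''-N^2(\log F)''=-\frac{\pi^2}{48}\sec^2\frac{\pi x}2+\frac{\pi^2N^2}4\sec^2\frac{\pi x}2\,\bigl(\varepsilon_N-\varepsilon_{N-1}\bigr)+o(1).
\]

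For the third step, note that since $\phi(x)=\frac{2}{1-x^2}$ is even, \eqref{IKformula1}--\eqref{IKformula2} show $Z_N(x)$ is even in $x$ (the symmetry $a\leftrightarrow b$), so $(\log Z_N)'(0)=0$; likewise $(\log F)'(0)=0$. Integrating the previous display twice from $0$ to $x$ and using $(\log\cos\frac{\pi r}2)''=-\frac{\pi^2}4\sec^2\frac{\pi r}2$, $(\log\cos\frac{\pi r}2)'(0)=0$, gives
\[
\log Z_N(x)-\log Z_N(0)-N^2\bigl[\log F(x)-\log F(0)\bigr]=\frac1{12}\log\cos\frac{\pi x}2+R_N(x)+o(1),
\]
where $R_N(x)=\frac{\pi^2}4\int_0^x\!\!\int_0^s N^2\sec^2\frac{\pi r}2(\varepsilon_N(r)-\varepsilon_{N-1}(r))\,dr\,ds$. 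The leading part of $\varepsilon_N(r)-\varepsilon_{N-1}(r)$ is $\frac{(-1)^N}{2N(\log N)^2}\bigl[\cos(\pi r(N+\tfrac12))+\cos(\pi r(N-\tfrac12))\bigr]$ times a slowly varying smooth factor, so one integration by parts in $r$ gives $\int_0^s N^2\sec^2\frac{\pi r}2(\varepsilon_N-\varepsilon_{N-1})\,dr=O((\log N)^{-2})$, hence $R_N(x)=o(1)$. Comparing with the expansion of the first step yields $D(x)-D(0)=\frac1{12}\log\cos\frac{\pi x}2$, i.e. $C(x)=C_0(\cos\frac{\pi x}2)^{1/12}$ with $C_0=C(0)=\lim_{N\to\infty}Z_N(0)/((\pi/2)^{N^2}N^{1/12})>0$ a constant independent of $x$, which together with the first step is exactly \eqref{theoasy2}--\eqref{freeenergy}.

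The main difficulty is the control of the remainder $\varepsilon_N-\varepsilon_{N-1}$. It is only $O\bigl(N^{-1}(\log N)^{-2}\bigr)$, so the middle term in the formula for $(\log Z_N)''-N^2(\log F)''$ is of size $N(\log N)^{-2}$ pointwise in $x$ and does \emph{not} go to zero — it is annihilated only after integration against the smooth weight $\sec^2\frac{\pi r}2$, through its oscillatory structure. Making this rigorous, and simultaneously checking that the $O(N^{-2})$ term in $\varepsilon_N$ (and the pieces hidden in $\log(1+\tfrac1{12N}+\varepsilon_N)$) have second $N$-differences of order $o(N^{-2})$, so that they do not contaminate the limiting identity $D(x)-D(0)=\frac1{12}\log\cos\frac{\pi x}2$, requires the finer form of the Deift--Zhou steepest-descent analysis underlying Theorem~\ref{theo1}. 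A minor point is the existence and continuity of $D(x)=\log C(x)$, which follows from the locally uniform convergence in the first step.
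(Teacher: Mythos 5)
Your proposal follows essentially the same route as the paper: obtain the shape $Z_N=C(x)F^{N^2}N^{1/12}(1+o(1))$ by summing $\log(h_k/(k!)^2)$ via Theorem~\ref{theo1}, then pin down the $x$-dependence of $C(x)$ through the Toda equation and kill the spurious linear term by the evenness of $Z_N(x)$. The structure and all the key ingredients match the paper. In the Toda step your integration-by-parts treatment of the oscillatory remainder $N^2\sec^2\frac{\pi r}{2}\,(\varepsilon_N-\varepsilon_{N-1})$ is a legitimate variant of what the paper does (the paper instead rewrites $\bigl(\frac{\pi}{2\cos\frac{\pi x}{2}}\bigr)^2\cos\pi xN\cos\frac{\pi x}{2}$ as a second $x$-derivative of an $O(N^{-1}(\ln N)^{-2})$ quantity, so the term is absorbed upon double integration); and your reliance on a finer $c_1(x)/N^2$ structure inside the $O(N^{-2})$ of Theorem~\ref{theo1} is exactly what the paper's equation \eqref{theo1result} makes explicit.

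There is however a genuine gap in your first step. You write
\[
\log Z_N = N^2\log F + \tfrac{1}{12}\log N + D(x) + O(N^{-1})
\]
and justify the $O(N^{-1})$ by the absolute convergence of $\sum_k\varepsilon_k$. That does not give the claimed rate: with $|\varepsilon_k|=O\bigl(k^{-1}(\ln k)^{-2}\bigr)$ the tail
\[
\Bigl|\sum_{k\ge N}\varepsilon_k\Bigr|\;\lesssim\;\int_N^\infty\frac{dt}{t(\ln t)^2}=\frac{1}{\ln N},
\]
so absolute estimates only yield $O(1/\ln N)$, which is far too weak for \eqref{theoasy2}. The $O(N^{-1})$ in Theorem~\ref{theo2} comes precisely from exploiting the sign $(-1)^k$ and the factor $\cos(\pi x(k+\tfrac12))$ in $\varepsilon_k$. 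In the paper this is done by Abel summation: set $S(k)=\sum_{l\le k}(-1)^l\cos(\pi x(l+\tfrac12))$, which is a bounded sequence (explicitly $S(k)=\frac{(-1)^k\cos(\pi x(k+1))-\cos\pi x}{2\cos\frac{\pi x}{2}}$), and use that the differences $A(k+1)-A(k)$ of the amplitude in Proposition~\ref{prop1} are $O\bigl(k^{-2}(\ln k)^{-2}\bigr)$; summation by parts then gives a tail of order $O\bigl(N^{-1}(\ln N)^{-2}\bigr)$. You need this (or an equivalent cancellation argument) in step~1; without it, the error term in \eqref{theoasy2} is not established, even though the determination of $C(x)$ via Toda still goes through.
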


{\it Remark.} The leading term in formula \eqref{theoasy2} for $Z_N$, after a rescaling, coincides with the one obtained in the
physics literature by Bogoliubov, Kitaev and Zvonarev \cite{BKZ}. The error term estimate of
$O(N^{-1})$ is much stronger than a logarithmic estimate which can be derived from \cite{BKZ}.  Expression \eqref{freeenergy} for the constant $C$ is completely new.  

The current article is a continuation of the work of the first author with Vladimir Fokin in \cite{BF} and with Karl Liechty in \cite{BL1}, \cite{BL2} and \cite{BL3}. They prove conjectures of Paul Zinn-Justin in \cite{Z} on the large $N$ asymptotics of $Z_N$ in different phase regions: In the disordered phase with parameterization \eqref{Dpara} for some $\varepsilon>0$,
\begin{equation}\label{ZND}
	(\textnormal{D})\hspace{1.5cm} Z_N=CF^{N^2}N^{\kappa}\Big(1+O\big(N^{-\varepsilon}\big)\Big),\ \ N\rightarrow\infty
\end{equation}
with

\begin{equation}\label{disorderedenergy}
	\kappa=\frac{1}{12}-\frac{2\gamma^2}{3\pi(\pi-2\gamma)},\ \ \ \ F=\frac{\pi\big(\sin(\gamma-t)\sin(\gamma+t)\big)}{2\gamma\cos\frac{\pi t}{2\gamma}}
\end{equation}
and an $N$ independent constant $C>0$, whose value is unknown. In the antiferroelectric phase with \eqref{AFpara}
\begin{equation}\label{ZNAF}
	(\textnormal{AF})\hspace{1.5cm} Z_N=CF^{N^2}\vartheta_4(N\omega)\Big(1+O\big(N^{-1}\big)\Big),\quad N\rightarrow\infty,
\end{equation}
where
\begin{equation}\label{AFenergy}
	\omega=\frac{\pi}{2}\bigg(1+\frac{t}{\gamma}\bigg),\qquad
 F=\frac{\pi\big(\sinh(\gamma-t)\sinh(\gamma+t)\big)\vartheta'_1(0)}{2\gamma\vartheta_1(\omega)},
\end{equation}
and $\vartheta_1(z)=\vartheta_1(z|q)$, $\vartheta_4(z)=\vartheta_4(z|q)$ are the Jacobi theta functions 
with the elliptic nome $q=e^{-\frac{\pi^2}{2\gamma}}$. The constant $C>0$ is  also here unknown. 

Furthermore, in the ferroelectric phase \eqref{Fpara} for any $\varepsilon>0$
\begin{equation*}
	(\textnormal{F})\hspace{1.5cm} Z_N=CF^{N^2}G^N\Big(1+O\big(e^{-N^{1-\varepsilon}}\big)\Big),\ \ N\rightarrow\infty
\end{equation*}
with
\begin{equation*}
 	C=1-e^{-4\gamma},\ \ \ \ G=e^{\gamma-t},\ \ \ \ F=\sinh(\gamma +t).
\end{equation*} 	
Finally on the critical line between the ferroelectric and disordered phases with the parameterizations $a=\frac{1}{2}(t-1),b=\frac{1}{2}(t+1)$ and $c=1$ with $t>1$
\begin{equation*}
	(\textnormal{F-D})\hspace{1.5cm} Z_N=CF^{N^2}G^{\sqrt{N}}N^{\frac{1}{4}}\Big(1+O\big(N^{-1/2}\big)\Big),\ \ N\rightarrow\infty
\end{equation*}
where
\begin{equation*}
	F=\frac{t+1}{2},\ \ \ \ G=\exp\Bigg[-\zeta\bigg(\frac{3}{2}\bigg)\sqrt{\frac{t-1}{2\pi}}\Bigg]
\end{equation*}
with $\zeta(s)$ denoting the Riemann zeta function and a positive, yet unknown, constant $C>0$.

To calculate the order of the phase transition on the critical line between 
 the disordered and antiferroelectric phase regions, we would like to 
compare the free energy in these phase regions
as we approach the critical line. First we have to adjust the partition function in the regions.
Observe that 
as we approach the critical line, the parameters $t,\gamma\rightarrow 0$, hence $c\to 0$. 
We  rescale $Z_N$ according to \eqref{conserv2}:
\begin{equation*}
	Z_N\bigg(\frac{a}{c},\frac{a}{c},\frac{b}{c},\frac{b}{c},1,1\bigg)=c^{-N^2}Z_N(a,a,b,b,c,c).
\end{equation*}
In the disordered phase we obtain from  \eqref{ZND} that
\begin{equation}\label{ZND1}
	Z_N\bigg(\frac{a}{c},\frac{a}{c},\frac{b}{c},\frac{b}{c},1,1\bigg)
=CF^{N^2}N^{\kappa}\Big(1+O\big(N^{-\varepsilon}\big)\Big),\ \ N\rightarrow\infty,
\end{equation}
with the same $C$ and $\kappa$ as in \eqref{ZND}, and with
\begin{equation}\label{energyD}
(\textnormal{D})\hspace{1.5cm} 
	F=F(\gamma,t)=\frac{\pi ab}{2\gamma c\,\cos\frac{\pi t}{2\gamma}},\ \ \ 0\leq|t|<\gamma<\frac{\pi}{2}\,.
\end{equation}
Here $a,\,b,\,c$ are parameterized as in \eqref{Dpara}.

In the antiferroelectric region we obtain from  \eqref{ZNAF} that
\begin{equation}\label{ZNAF1}
Z_N\bigg(\frac{a}{c},\frac{a}{c},\frac{b}{c},\frac{b}{c},1,1\bigg)
=CF^{N^2}\vartheta_4(N\omega)\Big(1+O\big(N^{-1}\big)\Big),\ \ N\rightarrow\infty,
\end{equation}
with the same $C$ and $\omega$ as in \eqref{ZNAF}, and with
\begin{equation}\label{energyAF0}
	(\textnormal{AF})\hspace{1.5cm} 
	F=F(\gamma,t) = \frac{\pi ab\,\vartheta_1'(0)}{2\gamma c\,\vartheta_1(\omega)}
\,,\quad 0\leq|t|<\gamma.
\end{equation}
Here $a,\,b,\,c$ are parameterized as in \eqref{AFpara}. Let us remind that
\begin{equation}\label{Jac1}
\begin{aligned}
\vartheta_1(z)&=\vartheta_1(z|q)=2\sum_{n=0}^{\infty}(-1)^nq^{(n+\frac{1}{2})^2}\sin\big((2n+1)z\big),\qquad q=e^{-\frac{\pi^2}{2\gamma}},\\
	\omega&=\frac{\pi}{2}\bigg(1+\frac{t}{\gamma}\bigg).
\end{aligned}
\end{equation}
Observe that as $q\to 0$,
\begin{equation}\label{Jac2}
\begin{aligned}
\vartheta_1(z|q)=2q^{1/4}\sin z+O(q^{9/4}),\qquad \vartheta'_1(0|q)=2q^{1/4}+O(q^{9/4}),
\end{aligned}
\end{equation}
hence
\begin{equation}\label{jac2}
\begin{aligned}
\frac{\vartheta'_1(0)}{\vartheta_1(\omega)}=\frac{1}{\sin \omega}+O(q^2)=\frac{1}{\cos\frac{\pi t}{2\gamma}}+O(q^2).
\end{aligned}
\end{equation}
The free energy in the antiferroelectric region can be written as
\begin{equation}\label{decomp}
	F(\gamma,t) = F^{\textnormal{reg}}(\gamma,t)+F^{\textnormal{sing}}(\gamma,t),
\end{equation}
where
\begin{equation}\label{energyAF}
\begin{aligned}
	&F^{\textnormal{reg}}(\gamma,t)=\frac{\pi ab}{2\gamma c\,\cos\frac{\pi t}{2\gamma}},\\
 &a=\sinh(\gamma-t),
\quad b=\sinh(\gamma+t),\quad c=\sinh(2\gamma),
\end{aligned}
\end{equation}
and
\begin{equation*}
	F^{\textnormal{sing}}(\gamma,t)=\frac{\pi ab}{2\gamma c}\bigg(\frac{\vartheta_1'(0)}{\vartheta_1(\omega)}-\frac{1}{\cos\frac{\pi t}{2\gamma}}\bigg).
\end{equation*}
Observe at this point, that $F(\gamma,t)$ in the disordered phase region see \eqref{energyD}
and $F^{\textnormal{reg}}(\gamma,t)$ in the antiferroelectric one see \eqref{energyAF} are given by the same expression, 
but the underlying parameterizations of $a,b$ and $c$ are different.
\smallskip

Introduce now the coordinates $(x,y)$ on the phase diagram as
\begin{equation}\label{newcoordinates}
	\frac{a}{c}=\frac{1-x}{2}+y,\quad \frac{b}{c}=\frac{1+x}{2}+y,\quad  x\in(-1,1),
\end{equation}
so that $y>0$ corresponds to the disordered phase region, whereas $y<0$ to the antiferroelectric one. Employing the change of variables
\begin{equation}\label{changeofvar}
	(\gamma,t) \mapsto \big(x(\gamma,t),y(\gamma,t)\big),
\end{equation}
defined by \eqref{newcoordinates} and \eqref{AFpara} in the antiferroelectric phase region, and,
respectively, by \eqref{newcoordinates} and \eqref{Dpara} in the disordered phase region, we set
\begin{equation*}
	F_D(x,y) \equiv F\big(\gamma(x,y),t(x,y)\big)\; \textnormal{in (D)}, 
\hspace{0.75cm} F_{AF}(x,y) \equiv F\big(\gamma(x,y),t(x,y)\big)\;\textnormal{in (AF)},
\end{equation*}
and on the critical line (cf. Theorem \ref{theo2}),
\begin{equation*}
	F_C(x) \equiv F = \frac{\pi(1-x^2)}{4\cos\frac{\pi x}{2}}.
\end{equation*}
Also, we set in the (AF) region
\begin{equation}
	F_{AF}^{\rm reg}(x,y) \equiv F^{\rm reg}\big(\gamma(x,y),t(x,y)\big),\; 
\quad F_{AF}^{\rm sing}(x,y) \equiv F^{\rm sing}\big(\gamma(x,y),t(x,y)\big).
\end{equation}
We prove the following result on the order of the phase transition on the critical line, see also the work \cite{Z}
of Zinn-Justin.

\begin{theorem}\label{theo3} We have the following properties:
\begin{enumerate}
  \item  For any $x\in(-1,1)$,
change of coordinates \eqref{changeofvar} implies that as $y\rightarrow 0$,  
\begin{equation}\label{gt}
	\gamma = 2\sqrt{\frac{|y|}{1-x^2}}\big(1+O(y)\big),\hspace{1cm} t = 4x\sqrt{\frac{|y|}{1-x^2}}\big(1+O(y)\big).
\end{equation}
\item
The functions $F_D(x,y)$ and $F_{AF}^{\textnormal{reg}}(x,y)$ can be analytically continued in $y$ 
to a neighborhood of the point $y=0$, and in this neighborhood
\begin{equation}\label{FDAF}
	F_D(x,y)=F_{AF}^{\textnormal{reg}}(x,y).
\end{equation} 
In addition, we have that
\begin{equation}\label{FDAFC}
	F_D(0,x) = F_C(x)=F_{AF}^{\textnormal{reg}}(0,x).
\end{equation} 
\item
As $y\rightarrow 0$, $F_{AF}^{\textnormal{sing}}(x,y)$ is exponentially small,
\begin{equation}\label{singesti}
	F_{AF}^{\textnormal{sing}}(x,y) = O\Big(e^{-\frac{\pi^2}{\gamma}}\Big),\ \ x\in(-1,1),
\end{equation} 
that is we have an infinite order phase transition.
\end{enumerate}
\end{theorem}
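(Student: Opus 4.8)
\emph{Overview.} The three assertions are, in order, an asymptotic inversion, an analytic‑continuation argument, and a short estimate; the organizing principle is that near the critical line the natural analytic variable is $\gamma^2$, not $\gamma$.

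\emph{Part (1).} From \eqref{AFpara} I would form $\frac ac\pm\frac bc$ and apply the addition formulas for $\sinh$, getting the two exact relations
\[
\frac{\cosh t}{\cosh\gamma}=\frac ac+\frac bc=1+2y,\qquad \frac{\sinh t}{\sinh\gamma}=\frac bc-\frac ac=x,
\]
and from \eqref{Dpara} the same identities with $(\cosh,\sinh)$ replaced by $(\cos,\sin)$. As $\gamma,t\to0$ the second relation gives $t=x\gamma\,(1+O(\gamma^2))$ and the first, upon expanding, gives $\gamma^2-t^2=4|y|\,(1+O(y))$ in either phase (the sign of $y$ being compensated by that of $\gamma^2-t^2$). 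Eliminating $t$ then yields the stated leading behaviour \eqref{gt} of $\gamma$ and $t$; this step is routine once the two relations are written down.

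\emph{Part (2).} The key algebraic simplification is $\sin(\gamma-t)\sin(\gamma+t)=\sin^2\gamma-\sin^2 t$, which combined with $\sin t=x\sin\gamma$ and $c=\sin2\gamma$ — and its hyperbolic analogue in (AF) — gives $\frac{ab}{c}=\frac{1-x^2}{2}\tan\gamma$ in (D) and $\frac{ab}{c}=\frac{1-x^2}{2}\tanh\gamma$ in (AF). Hence $F_D=\tfrac{\pi(1-x^2)}{4}\,\tfrac{\tan\gamma}{\gamma}\,\sec\!\bigl(\tfrac{\pi t}{2\gamma}\bigr)$ and $F_{AF}^{\mathrm{reg}}=\tfrac{\pi(1-x^2)}{4}\,\tfrac{\tanh\gamma}{\gamma}\,\sec\!\bigl(\tfrac{\pi t}{2\gamma}\bigr)$. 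Now $\tfrac{\tan\gamma}{\gamma}$, the solution $t/\gamma$ of $\sin t=x\sin\gamma$ (equal to $x$ at $\gamma=0$), and hence $\sec\tfrac{\pi t}{2\gamma}$, are even in $\gamma$ and analytic in $u:=\gamma^2$ near $0$; write $F_D=\tfrac{\pi(1-x^2)}{4}\Psi(\gamma^2)$ with $\Psi$ analytic near $u=0$ and $\Psi(0)=\sec\tfrac{\pi x}{2}$. The relation $\tfrac{\cos t}{\cos\gamma}=1+2y$ realizes $y$ as an analytic function of $u$ with $y(0)=0$ and $y'(0)=\tfrac{1-x^2}{4}\neq0$ (one must check that $\sqrt{1-x^2\sin^2\gamma}$ is analytic at $\gamma=0$, which holds since its radicand equals $1$ there), so by the analytic inverse function theorem $\gamma^2=\gamma_D^2(y)$ is analytic near $y=0$; thus $F_D$ extends to $\tfrac{\pi(1-x^2)}{4}\Psi(\gamma_D^2(y))$, equal to $F_C(x)$ at $y=0$. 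In (AF) the identical computation produces $\Psi(-u)$ — this is precisely the substitution $\gamma\mapsto i\gamma$, under which $\sinh\mapsto i\sin$, $\tanh\mapsto i\tan$, and $y_{AF}(u)=y_D(-u)$ — so $-\gamma_{AF}^2(y)=\gamma_D^2(y)$ as analytic germs at $0$ and therefore $F_{AF}^{\mathrm{reg}}(x,y)=\tfrac{\pi(1-x^2)}{4}\Psi(\gamma_D^2(y))=F_D(x,y)$, which is \eqref{FDAF}; the value at $y=0$ gives \eqref{FDAFC}. This is the heart of the matter; the delicate points are the analyticity of $y(\gamma^2)$ with non‑vanishing derivative, and tracking the $\gamma\mapsto i\gamma$ symmetry through the implicitly defined quantities $t/\gamma$ and $\gamma^2(y)$.

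\emph{Part (3).} By the hyperbolic computation in Part (2) the prefactor $\tfrac{\pi ab}{2\gamma c}=\tfrac{\pi(1-x^2)}{4}\tfrac{\tanh\gamma}{\gamma}$ stays bounded, and bounded away from $0$, as $y\to0$, uniformly for $x$ in compacta of $(-1,1)$. Since $\sin\omega=\cos\tfrac{\pi t}{2\gamma}$, formula \eqref{jac2} gives $\tfrac{\vartheta_1'(0)}{\vartheta_1(\omega)}-\tfrac1{\cos(\pi t/2\gamma)}=O(q^2)$ with $q^2=e^{-\pi^2/\gamma}$, and $\cos\tfrac{\pi t}{2\gamma}\to\cos\tfrac{\pi x}{2}\neq0$ keeps everything uniform, so $F_{AF}^{\mathrm{sing}}(x,y)=O(e^{-\pi^2/\gamma})$, which is \eqref{singesti}. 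Finally $\gamma\sim2\sqrt{|y|/(1-x^2)}$ by Part (1), hence $e^{-\pi^2/\gamma}=O(|y|^n)$ for every $n$; since by Part (2) the (AF) free energy differs from the analytic continuation of the (D) free energy only by $F_{AF}^{\mathrm{sing}}$, all one‑sided $y$‑derivatives at $y=0$ agree — an infinite order transition. The main obstacle overall is Part (2); Parts (1) and (3) are short computations once the exact relations of Part (1) are in hand.
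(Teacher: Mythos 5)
Your proof is correct and rests on the same mechanism as the paper's: the entireness of $\sin$, $\sinh$ and the $\gamma\mapsto i\gamma$ symmetry intertwining the (D) and (AF) parametrizations. The organization differs a bit: the paper introduces $k$ with $y=\pm k^2$, tracks oddness and the $k\mapsto ik$ relations among $f_D,g_D,f_{AF},g_{AF}$, and assembles the free energy as an even power series in $k$; you instead take $u=\gamma^2$ as the analytic variable, collapse $ab/c$ to $\tfrac{1-x^2}{2}\tan\gamma$ (resp.\ $\tanh\gamma$), and apply the inverse function theorem to $y=\tfrac{1-x^2}{4}u+O(u^2)$. Your version is a little shorter and avoids carrying $\tfrac{\pi ab}{c^2}$ as a separate analytic factor, while the paper's $k$-parametrization bypasses the inverse-function step and produces the explicit coefficients $f_0,f_1$ of the common Taylor series. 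One correction: carried out, your Part~(1) gives $t = x\gamma\big(1+O(\gamma^2)\big) = 2x\sqrt{|y|/(1-x^2)}\,\big(1+O(y)\big)$, not the coefficient $4x$ stated in \eqref{gt}; your relation $\sin t=x\sin\gamma$ (resp.\ $\sinh t = x\sinh\gamma$) agrees with the paper's own \eqref{gtD}, \eqref{gtAF} and shows that the $4x$ in \eqref{gt} --- and the $2x$ inside $\arcsin$, $\mathrm{arcsinh}$ in \eqref{gteqD}, \eqref{gteqAF} --- is a misprint, so do not simply assert that elimination ``yields the stated leading behaviour.''
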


The setup for the remainder of the paper is as follows: We will first proof Theorem \ref{theo1} within the framework of the Riemann-Hilbert approach to orthogonal polynomials. We start with a rescaling of the weight and the construction of the equilibrium measure. Afterwards a sequence of transformations is carried out which allow to approximate the global solution of the Riemann-Hilbert problem by local model functions, and enable us to solve the initial problem via iteration. After that, Theorem \ref{theo2} is a corollary to Theorem \ref{theo1}, via \eqref{IKZformula} and \eqref{toda}, with some extra arguments enabling us to prove the $O(N^{-1})$ estimate of the error term. The proof of Theorem \ref{theo3} will be given at the end of the article.


\section{Rescaling and the equilibrium measure}
Let us scale the variables in \eqref{tau1} as $\lambda_i=N\mu_i$, so that
\begin{equation}\label{tau2}
	\tau_N=\frac{N^{N^2}}{N!}\tilde{\tau}_N,\hspace{0.5cm} \tilde{\tau}_N=\int\limits_{-\infty}^{\infty}\cdots\int\limits_{-\infty}^{\infty}e^{-N\sum_{k=1}^NV(\mu_k)}\prod_{i<j}(\mu_i-\mu_j)^2d\mu_1\cdots d\mu_N.
\end{equation}
Now use the rescaled monic orthogonal polynomials $\{p_{N,n}(s)\}_{n\geq 0}$ with
\begin{equation*}
	p_{N,n}(s) = \frac{1}{N^n}p_n(Ns),
\end{equation*}
thus satisfying
\begin{equation*}
	\int\limits_{-\infty}^{\infty}p_{N,n}(s)p_{N,m}(s)e^{-NV(s)}ds = h_{N,n}\delta_{nm},\ \ \ \ h_{N,n}=\frac{h_n}{N^{2n+1}}
\end{equation*}
and simplify \eqref{tau2} similarly as in \eqref{connection}
\begin{equation*}
	\tilde{\tau}_N=N!\prod_{i=0}^{N-1}h_{N,i}.
\end{equation*}
Our original task of studying the asymptotics of orthogonal polynomials with respect to the weight function $w(s)=e^{-V(s)}$ has therefore simply been transfered to polynomials with the underlying weight
\begin{equation}\label{potential}
	\tilde{w}(s) = e^{-NV(s)},s\in\mathbb{R};\ \ \ \ \hspace{0.5cm} V(s)=|s|-xs,\ |x|<1.
\end{equation}
Our strategy will now focus on the determination of the large $N$ asymptotics of the normalizing constants $h_{N,N}$, and then, via $h_N=N^{2N+1}h_{N,N}$, the evaluation of $\tau_N$ from  \eqref{connection}.
\bigskip

To this end introduce the normalized counting measure $\nu$ on $\mathbb{R}$
\begin{equation*}
	\nu(s) = \frac{1}{N}\sum_{k=1}^{N}\delta(s-\mu_k),\hspace{0.5cm} \int\limits_{\mathbb{R}}d\nu(s)=1
\end{equation*}
and rewrite the integrand in \eqref{tau2} as
\begin{equation*}
	e^{-N\sum_{k=1}^NV(\mu_k)}\prod_{i<j}(\mu_i-\mu_j)^2=e^{-N^2H(\nu)}
\end{equation*}
with
\begin{equation*}
	H(\nu) = \int\int\ln|t-s|^{-1}d\nu(t)d\nu(s)+\int V(s)d\nu(s).
\end{equation*}
Since we are particularly interested in the behavior of $\tau_N$ as $N\rightarrow\infty$, we expect the value of $\tilde{\tau}_N$ in \eqref{tau2} to be focused in a neighborhood of the global minimum of the energy functional $H(\nu)$, with $\nu$ reaching over 
\begin{equation*}
	\mathcal{M}^1(\mathbb{R}) = \big\{\mu\in\textnormal{Borel measures on}\ \mathbb{R}:\int d\mu =1\big\}.
\end{equation*}
It is well known (for instance \cite{DKM}, \cite{DKMVZ}) that the energy minimization problem
\begin{equation*}
	E^V=\inf_{\mu\in\mathcal{M}^1(\mathbb{R})}\bigg[\int\int\ln|t-s|^{-1}d\mu(t)d\mu(s)+\int V(s)d\mu(s)\bigg].
\end{equation*}
has a unique solution $\mu = \mu^V\in\mathcal{M}^1(\mathbb{R})$, called the {\it equilibrium measure}.

In the given situation the underlying potential $ V(s)=|s|-xs$ is  convex. In this case the support
of the equilibrium measure consists of one interval,
\begin{equation}\label{eqsupport}
	J=\textnormal{supp}(\mu^V)=[\alpha,\beta]\subset\mathbb{R}.
\end{equation}
For our purposes the equilibrium measure will be essential in the construction of the {\it $g$-function}
\begin{equation}\label{gfunction}
	g(z)=\int\limits_J\ln(z-w)d\mu^V(w) = \int\limits_J\ln(z-w)\rho(w)dw,\hspace{0.5cm} z\in\mathbb{C}\backslash(-\infty,\beta_q],
\end{equation}
where we choose the principal branch for the logarithm. Moreover the $g$-function in turn determines the equilibrium measure uniquely by the following variational conditions: there exists a real constant $l$, the {\it Lagrange multiplier},  such that
\begin{equation}\label{identity1}
		z\in\mathbb{R}\backslash J:\ \ g_+(z)+g_-(z)-V(z)-l\leq 0,\ \ \ \ \ z\in J:\ \ g_+(z)+g_-(z)-V(z)-l=0.
\end{equation}
This characterization of the equilibrium measure leads to a Riemann-Hilbert problem for $g'(z)$ which can be solved as
\begin{equation}\label{gprimeRHP}
	g'(z) = \frac{\sqrt{R(z)}}{2\pi i}\int\limits_J \frac{V'(w)}{\sqrt{R(w)}_+}\frac{dw}{w-z},\ \ z\in\mathbb{C}\backslash J.
\end{equation}

It turns out that for the given potential,  $ V(s)=|s|-xs$, the $g$-function, its support, 
the density function $\rho(z)$, and the Lagrange multiplier $l$ can all be evaluated explicitly, see \cite{Z} and \cite{BF}. 
We have from \eqref{eqsupport}, \eqref{gprimeRHP} and \eqref{potential} that
\begin{equation*}
	g'(z) = \frac{\sqrt{(z-\alpha)(z-\beta)}}{2\pi i}\int\limits_{\alpha}^{\beta}\frac{\textnormal{sgn}(w)-x}{\sqrt{(w-\alpha)(w-\beta)}_+}\frac{dw}{w-z}\,.
\end{equation*}
To evaluate this integral we use the residue theorem,
\begin{eqnarray*}
	&&\int\limits_{\alpha}^{\beta}\frac{dw}{\sqrt{(w-\alpha)(w-\beta)}_+(w-z)} =\frac{1}{2}\int\limits_{C_0}\frac{dw}{\sqrt{(w-\alpha)(w-\beta)}(w-z)}\\
	&=&\pi i\ \textnormal{res}_{w=z}\frac{1}{\sqrt{(w-\alpha)(w-\beta)}(w-z)}+\pi i\ \textnormal{res}_{w=\infty}\frac{1}{\sqrt{(w-\alpha)(w-\beta)}(w-z)}\\
	&=&\frac{\pi i}{\sqrt{(z-\alpha)(z-\beta)}}
\end{eqnarray*}
where $C_0$ denotes a closed contour around the interval $[\alpha,\beta]$ such that both points $z$ and $\infty$ lie to the left of $C_0$. Similarly
\begin{equation*}
	\int\limits_{\alpha}^{\beta}\frac{\textnormal{sgn}(w)}{\sqrt{(w-\alpha)(w-\beta)}_+}\frac{dw}{w-z}=\frac{\pi i}{\sqrt{(z-\alpha)(z-\beta)}} +2i\int\limits_{\alpha}^0\frac{dw}{\sqrt{(w-\alpha)(\beta-w)}(w-z)}
\end{equation*}
and the latter integrand has an explicit antiderivative. We conclude
\begin{equation}\label{gprime}
	g'(z)=\frac{1-x}{2}+\frac{2}{\pi i}\ln\frac{\sqrt{\beta(z-\alpha)}-i\sqrt{-\alpha(z-\beta)}}{\sqrt{z(\beta-\alpha)}},\hspace{0.5cm}z\in\mathbb{C}\backslash[\alpha,\beta]
\end{equation}
choosing the principal branch for the square roots in \eqref{gprime}. In order to evaluate the endpoints $\alpha,\beta$ of the support, use the resolvent
 \begin{equation*}
	g'(z) = \int\limits_{\alpha}^{\beta}\frac{d\mu^V(w)}{z-w} = \frac{1}{z}+O\big(z^{-2}\big),\ \ z\rightarrow\infty
\end{equation*}
and compare with the expansion in \eqref{gprime} as $z\rightarrow\infty$. One obtains
\begin{equation}\label{endpoint}
\begin{aligned}
	&\alpha=-\pi\tan\frac{\pi}{4}(1-x),\qquad \beta=\pi\tan\frac{\pi}{4}(1+x),\\
&(-\alpha)\beta=\pi^2,\qquad \beta-\alpha=\frac{2\pi}{\cos\frac{\pi x}{2}}\,.
\end{aligned}
\end{equation}
For the $g$-function itself we use an indefinite integration by parts,
\begin{equation*}
	g(z) = zg'(z)-\int zg''(z)dz +C,\ \ \ \ C\equiv\textnormal{const},
\end{equation*}
where from \eqref{gprime}
\begin{equation*}
	g''(z)=\frac{\sqrt{(-\alpha)\beta}}{z\pi(\beta-\alpha)}\Bigg(\sqrt{\frac{z-\beta}{z-\alpha}}-\sqrt{\frac{z-\alpha}{z-\beta}}\Bigg).
\end{equation*}
Thus,
\begin{equation}\label{finalg}
	g(z)=zg'(z)+2\ln\big(\sqrt{z-\alpha}+\sqrt{z-\beta}\big)-(1+2\ln 2),\ \ \ z\in\mathbb{C}\backslash[\alpha,\beta],
\end{equation}
and the constant $C=-(1+2\ln 2)$ is obtained from a comparison of the asymptotics in 
\begin{equation}\label{gasyinfinity}
	g(z)=\int\limits_{\alpha}^{\beta}\ln(z-w)d\mu^V(w) = \ln z +O\big(z^{-1}\big),\ \ z\rightarrow\infty
\end{equation}
with the asymptotics in \eqref{finalg}. We move on to the density and use the resolvent equation again:
\begin{equation}\label{density1}
	g_+'(z)-g_-'(z)=-2\pi i\rho(z),\ \ z\in[\alpha,\beta].
\end{equation}
However, from \eqref{gprime},
\begin{equation*}
	g_{\pm}'(z) = \frac{\textnormal{sgn}(z)-x}{2}\pm\frac{2}{\pi i}\ln\frac{\sqrt{\beta(z-\alpha)}+\sqrt{-\alpha(\beta-z)}}{\sqrt{|z|(\beta-\alpha)}},\ \ z\in [\alpha,\beta],
\end{equation*}
since
\begin{equation*}
	\frac{\sqrt{\beta(z-\alpha)}-\sqrt{-\alpha(\beta-z)}}{\sqrt{|z|(\beta-\alpha)}} = \frac{\sqrt{|z|(\beta-\alpha)}}{\sqrt{\beta(z-\alpha)}+\sqrt{-\alpha(\beta-z)}}e^{i\textnormal{arg}z}.
\end{equation*}
Hence, via \eqref{density1},
\begin{equation}\label{density2}
	\rho(z)=\frac{2}{\pi^2}\ln\frac{\sqrt{\beta(z-\alpha)}+\sqrt{-\alpha(\beta-z)}}{\sqrt{|z|(\beta-\alpha)}},\ \ \ z\in(\alpha,\beta),
\end{equation}
thus the density has a logarithmic singularity at the origin. Finally, for the Lagrange multiplier, we state first, as a consequence of \eqref{density1} and \eqref{identity1}, that for $z\in[\alpha,\beta]$,
\begin{equation}\label{identity2}
	g_+(z)-g_-(z)=2\pi i\int\limits_z^{\beta}\rho(w)dw,\ \ \ g_+(z)+g_-(z)-V(z)-l=0.
\end{equation}
But for such $z$
\begin{eqnarray*}
	g_+(z) &=& \frac{V(z)}{2}+\frac{2z}{\pi i}\ln\frac{\sqrt{\beta(z-\alpha)}+\sqrt{-\alpha(\beta-z)}}{\sqrt{|z|(\beta-\alpha)}}+2\ln(\beta-\alpha)\\
	&&-2\ln\big(\sqrt{z-\alpha}-i\sqrt{\beta-z}\big)+C
\end{eqnarray*}
and
\begin{equation*}
	g_-(z) = \frac{V(z)}{2}-\frac{2z}{\pi i}\ln\frac{\sqrt{\beta(z-\alpha)}+\sqrt{-\alpha(\beta-z)}}{\sqrt{|z|(\beta-\alpha)}}+2\ln\big(\sqrt{z-\alpha}-i\sqrt{\beta-z}\big)+C,
\end{equation*}
hence
\begin{equation}\label{Lmultiplier}
	l = 2\ln(\beta-\alpha)-2(1+2\ln 2).
\end{equation}
We have now gathered enough information to start the asymptotical analysis.


\section{Riemann-Hilbert problem for orthogonal polynomials}

The following Riemann-Hilbert problem (RHP), originally introduced by Fokas, Its and Kitaev in \cite{FIK}, is essential: Find a $2\times 2$ piecewise analytic matrix-valued function $Y(z)=Y^{(n)}(z)$ such that
\begin{itemize}
	\item $Y^{(n)}(z)$ is analytic for $z\in\mathbb{C}\backslash \mathbb{R}$
	\item Orienting the real line from left to right the following jump relation holds
	\begin{equation*}
		Y^{(n)}_+(z)=Y^{(n)}_-(z)\begin{pmatrix}
		1 & \tilde{w}(z)\\
		0 & 1\\
		\end{pmatrix},\hspace{0.5cm}z\in\mathbb{R}
	\end{equation*}
	\item As $z\rightarrow\infty$, we have
	\begin{equation*}
		Y^{(n)}(z)=\Big(I+O\big(z^{-1}\big)\Big)z^{n\sigma_3},\ \ \ \ \sigma_3=\begin{pmatrix}
		1 & 0\\
		0 & -1\\
		\end{pmatrix}
	\end{equation*}
\end{itemize}
The stated problem has a unique solution $Y^{(n)}(z)$ given by
\begin{equation*}
	Y^{(n)}(z)=\begin{pmatrix}
	\pi_n(z) & \frac{1}{2\pi i}\int_{\mathbb{R}}\pi_n(s)\tilde{w}(s)\frac{ds}{s-z}\\
	\gamma_{n-1}\pi_{n-1}(z) & \frac{\gamma_{n-1}}{2\pi i}\int_{\mathbb{R}}\pi_{n-1}(s)\tilde{w}(s)\frac{ds}{s-z}\\
	\end{pmatrix}
\end{equation*}
where $\pi_n(z)=z^n+\ldots$ is the $n^{\textnormal{th}}$ monic orthogonal polynomial with respect to the measure $d\mu(s)=\tilde{w}(s)ds$ and 
\begin{equation*}
	\gamma_n = -\frac{2\pi i}{h_n},\hspace{0.5cm} h_n=\int\limits_{-\infty}^{\infty}\pi_n^2(s)d\mu(s).
\end{equation*}
Furthermore $Y^{(n)}(z)z^{-n\sigma_3}$ admits a full asymptotic expansion near infinity of the following form
\begin{equation*}
	Y^{(n)}(z)z^{-n\sigma_3} = I+\frac{Y_1^{(n)}}{z} +\frac{Y_2^{(n)}}{z^2}+O\big(z^{-3}\big),\ \ \ z\rightarrow\infty,\ \ Y_k^{(n)}=\big(Y_k^{(n)}\big)_{ij}
\end{equation*}
and since in our situation $\tilde{w}(s) = e^{-NV(s)}$, we obtain
\begin{equation}\label{constantRHP}
	h_{N,n}=-2\pi i\big(Y_1^{(n)}\big)_{12}
\end{equation}
which, in terms of the previous discussion in the last section, shows that we need to solve the given RHP for $Y(z) = Y^{(N)}(z)$. This solution will be obtained in the framework of the Deift-Zhou nonlinear steepest descent method \cite{DZ} using techniques developed in \cite{DKMVZ}. These techniques allow to approximate the global solution $Y(z)$ by local model functions, {\it parametrices}, and the iterative solution of a singular integral equation. We will elaborate the required steps in the following subsections.


\section{First transformation of the RHP - Normalization}
Recall \eqref{gasyinfinity} and make the following normalizing substitution in the orginal $Y$-RHP
\begin{equation}\label{firsttrafo}
	Y(z) = \exp\bigg(\frac{Nl}{2}\sigma_3\bigg)T(z)\exp\bigg(N\Big(g(z)-\frac{l}{2}\Big)\sigma_3\bigg),\ \ \ z\in\mathbb{C}\backslash\mathbb{R}.
\end{equation}
This leads to a RHP for the function $T(z)$
\begin{itemize}
		\item $T(z)$ is analytic for $z\in\mathbb{C}\backslash\mathbb{R}$.
		\item The above properties of $g(z)$, see \eqref{identity2}, imply the following jumps
		 \begin{equation*}
	T_+(z) = T_-(z)\begin{pmatrix}
	e^{-N(g_+-g_-)} & 1\\
	0 & e^{N(g_+-g_-)}\\
	\end{pmatrix},\ z\in[\alpha,\beta]
\end{equation*}
and
\begin{equation*}
	T_+(z) = T_-(z)\begin{pmatrix}
	1 & e^{N(g_++g_--V-l)}\\
	0 & 1\\
	\end{pmatrix},\ z\in\mathbb{R}\backslash[\alpha,\beta].
\end{equation*}
		\item At infinity, $T(z)$ is normalized
		\begin{equation*}
			T(z)=I+O\big(z^{-1}\big),\ \ z\rightarrow\infty.
		\end{equation*}
\end{itemize}
Let us take a closer look at the listed jump matrices: For $z\in(\beta,+\infty)$
\begin{equation*}
	g_+(z)+g_-(z)-V(z)-l = \frac{4z}{\pi}\textnormal{arg}\frac{\sqrt{\beta(z-\alpha)}-i\sqrt{-\alpha(z-\beta)}}{\sqrt{z(\beta-\alpha)}}+4\ln\frac{\sqrt{z-\alpha}+\sqrt{z-\beta}}{\sqrt{\beta-\alpha}}
\end{equation*}
and for $z\in(-\infty,\alpha)$
\begin{eqnarray*}
	g_+(z)+g_-(z)-V(z)-l &=& 2z +\frac{4z}{\pi}\textnormal{arg}\frac{\sqrt{\beta(\alpha-z)}-i\sqrt{-\alpha(\beta-z)}}{\sqrt{-z(\beta-\alpha)}}\\
	&&+4\ln\frac{\sqrt{\alpha-z}+\sqrt{\beta-z}}{\sqrt{\beta-\alpha}}
\end{eqnarray*}
thus (see \eqref{identity1})
\begin{equation*}
	g_+(z)+g_-(z)-V(z)-l<0,\ \ \ z\in\mathbb{R}\backslash[\alpha,\beta]
\end{equation*}
and we conclude for $z\in(-\infty,\alpha-\eta)\cup(\beta+\eta,+\infty),\eta>0$ fixed,
\begin{equation}\label{linebehavior}
	\begin{pmatrix}
	1 & e^{N(g_++g_--V-l)}\\
	0 & 1\\
	\end{pmatrix}\longrightarrow I,\ \ N\rightarrow\infty
\end{equation}
where the stated convergence is, in fact, exponentially fast. Secondly the line segment $[\alpha,\beta]$: The given potential $V(s)=|s|-xs,s\in\mathbb{R}$ admits analytical continuation in the left and right halfplane separately via
\begin{equation}\label{Vcontinued}
	V(z) = \left\{
                                   \begin{array}{ll}
                                     z(1-x), & \hbox{Re$z> 0$;} \\
                                     -z(1+x), & \hbox{Re$z< 0$,} 
                                   \end{array}
                                 \right.
\end{equation}
and this continuation is however two-valued on the imaginary axis. Since $g_-(z) = V(z)-g_+(z)+l,z\in[\alpha,\beta]$, the function
\begin{equation*}
	G(z) = g_+(z)-g_-(z) = 2g_+(z)-V(z)-l
\end{equation*}
admits analytical continuation in a neigborhood of the segment $[\alpha,\beta]$ into the first and second quadrant of the complex plane using the appropriate continuation of $V(z)$ in \eqref{Vcontinued}. On the other hand from \eqref{identity2}
\begin{equation}\label{capitalg}
	G(z) = 2\pi i\int\limits_z^{\beta}\rho(w)dw
\end{equation}
and therefore
\begin{equation*}
	\frac{d}{dy}G(z+iy)\Big|_{y=0}=2\pi\rho(z)=\frac{4}{\pi}\ln\frac{\sqrt{\beta(z-\alpha)}+\sqrt{-\alpha(\beta-z)}}{\sqrt{|z|(\beta-\alpha)}}>0,\ \ \ z\in(\alpha,0)\cup(0,\beta),
\end{equation*}
i.e. the stated local continuation of $G(z)$ into the first and second quadrant satisfies
\begin{equation}\label{realpart1}
	\textnormal{Re}\ G(z)>0\ \ \ \textnormal{for}\ \ \textnormal{Re}z\neq 0,\ \textnormal{Im}z>0.
\end{equation}
In the lower halfplane we argue in a similar fashion, indeed
\begin{equation*}
	G(z)=-2g_-(z)+V(z)+l
\end{equation*}
admits local continuation into the third and fourth quadrant satisfying
\begin{equation}\label{realpart2}
	\textnormal{Re}\ G(z)<0\ \ \ \textnormal{for}\ \ \textnormal{Re}z\neq 0,\ \textnormal{Im}z<0.
\end{equation}
These continuations will now be used in the following matrix factorizations
\begin{eqnarray*}
		\begin{pmatrix}
			e^{-N(g_+(z)-g_-(z))} & 1\\
			0 & e^{N(g_+(z)-g_-(z))}\\
		\end{pmatrix} &=& \begin{pmatrix}
		1 & 0\\
		e^{NG(z)} & 1\\
		\end{pmatrix}\begin{pmatrix}
		0 & 1\\
		-1 & 0\\
		\end{pmatrix}\begin{pmatrix}
		1 & 0\\
		e^{-NG(z)} & 1\\
		\end{pmatrix}\\
		 &=& S_{L_1}S_PS_{L_2},
\end{eqnarray*}
motivating the second transformation of the RHP.


\section{Second Transformation of the RHP - Opening of lenses}
Let $\mathcal{L}_j^{\pm}$ denote the {\it upper (lower) lense}, shown in Figure \ref{fig4}, which is bounded by the contour $\gamma_j^{\pm}$. Define
\begin{equation}\label{secondtrafo}
	S(z) = \left\{
                                 \begin{array}{ll}
                                  T(z)S_{L_2}^{-1} , & \hbox{$z\in\mathcal{L}_1^+\cup\mathcal{L}_2^+$,} \smallskip \\
                                  T(z)S_{L_1} , & \hbox{$z\in\mathcal{L}_3^-\cup\mathcal{L}_4^-$,} \smallskip\\
                                  T(z), & \hbox{otherwise,}
                                 \end{array}
                               \right.
\end{equation}
\begin{figure}[tbh]
  \begin{center}
  \psfragscanon
  \psfrag{1}{\footnotesize{$\alpha$}}
  \psfrag{2}{\footnotesize{$\beta$}}
  \psfrag{3}{\footnotesize{$\mathcal{L}_1^+$}}
  \psfrag{4}{\footnotesize{$\mathcal{L}_2^+$}}
  \psfrag{5}{\footnotesize{$\mathcal{L}_3^-$}}
  \psfrag{6}{\footnotesize{$\mathcal{L}_4^-$}}
  \psfrag{7}{\footnotesize{$\gamma_1^+$}}
  \psfrag{8}{\footnotesize{$\gamma_2^+$}}
  \psfrag{9}{\footnotesize{$\gamma_3^-$}}
  \psfrag{10}{\footnotesize{$\gamma_4^-$}}
  \includegraphics[width=10cm,height=3cm]{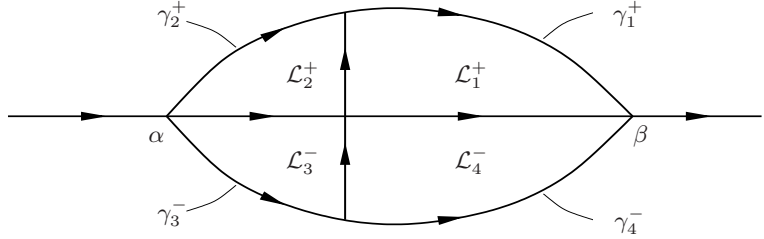}
  \end{center}
  \caption{The second transformation - opening of lenses}
  \label{fig4}
\end{figure}
then $S(z)$ solves the following RHP
\begin{itemize}
	\item $S(z)$ is analytic for $z\in\mathbb{C}\backslash(\mathbb{R}\cup \Gamma)$, with $\Gamma=\gamma_1^+\cup\gamma_2^+\cup\gamma_3^-\cup\gamma_4^-\cup(-i\varepsilon,i\varepsilon)$ and $0<\varepsilon<1$ remains fixed.
	\item The following jumps hold, with orientation fixed as in Figure \ref{fig4}
	\begin{equation*}
		S_+(z)=S_-(z)
	\left\{\begin{array}{ll}
                                  \bigl(\begin{smallmatrix}
                                   1 & 0\\
																e^{-NG(z)}& 1\\
																	\end{smallmatrix}\bigr), & \hbox{$z\in\gamma_1^+\cup\gamma_2^+$,} \smallskip \\
                                  \bigl(\begin{smallmatrix}
                                  0 & 1\\
                                  -1 & 0\\
                                  \end{smallmatrix}\bigr) , & \hbox{$z\in[\alpha,0)\cup(0,\beta]$,} \smallskip\\
                                  \bigl(\begin{smallmatrix}
                                  1 & e^{N(g_++g_--V-l)}\\
                                  0 & 1\\
                                  \end{smallmatrix}\bigr), & \hbox{$z\in\mathbb{R}\backslash[\alpha,\beta]$,} \smallskip\\
                                  \bigl(\begin{smallmatrix}
                                  1 & 0\\
                                  e^{NG(z)} & 1\\
                                  \end{smallmatrix}\bigr), & \hbox{$z\in\gamma_3^-\cup\gamma_4^-$.}\\
                                 \end{array}
                               \right.
   \end{equation*}
   Furthermore on the vertical line segment oriented upwards
   \begin{equation}\label{vertical1}
   	S_+(z)=S_-(z)\begin{pmatrix}
	1 & 0\\
	j_1(z) & 1\\
	\end{pmatrix},\ \ z\in(0,i\varepsilon)
\end{equation}
where
\begin{equation*}
	j_1(z) = e^{-NG_-(z)}-e^{-NG_+(z)}=-2ie^{-Nf_1(z)}\sin(Niz)
\end{equation*}
with 
\begin{eqnarray*}
	f_1(z)&=&\frac{4z}{\pi i}\ln\frac{\sqrt{\beta(z-\alpha)}+\sqrt{-\alpha(\beta-z)}}{\sqrt{\beta-\alpha}}-4\ln\frac{\sqrt{z-\alpha}-i\sqrt{\beta-z}}{\sqrt{\beta-\alpha}}\\
	&&+z-\frac{2z}{\pi i}\ln z
\end{eqnarray*}
and similarly
\begin{equation}\label{vertical2}
	S_+(z)=S_-(z)\begin{pmatrix}
	1 & 0\\
	j_2(z)& 1\\
	\end{pmatrix},\ \ z\in(-i\varepsilon,0)
\end{equation}
with
\begin{equation*}
	j_2(z) = e^{NG_+(z)}-e^{NG_-(z)} = 2ie^{-Nf_2(z)}\sin(Niz)
\end{equation*}
where
\begin{eqnarray*}
	f_2(z)&=&\frac{4(-z)}{\pi i}\ln\frac{\sqrt{\beta(z-\alpha)}+\sqrt{-\alpha(\beta-z)}}{\sqrt{\beta-\alpha}}+4\ln\frac{\sqrt{z-\alpha}-i\sqrt{\beta-z}}{\sqrt{\beta-\alpha}}\\
	&&-z-\frac{2(-z)}{\pi i}\ln(-z).
\end{eqnarray*}
\item At infinity, $S(z)=I+O\big(z^{-1}\big), z\rightarrow\infty$.
\end{itemize}
With \eqref{realpart1} and \eqref{realpart2}, the jump matrices will approach the identity matrix exponentially fast on the upper lense boundary $\gamma_1^+\cup\gamma_2^+$ and on the lower lense boundary $\gamma_3^-\cup\gamma_4^-$. Also, as we have seen earlier in \eqref{linebehavior}, a similar statement holds along $\mathbb{R}\backslash[\alpha,\beta]$. Let us direct our attention therefore towards the vertical line segment $(-i\varepsilon,i\varepsilon)$. For $z=iy,0<y<\varepsilon$ we have
\begin{equation}\label{finalvertical1}
	j_1(iy)=2ie^{-N(-\frac{2}{\pi}y\ln y+yh_1(y)+h_2(y))}\sin(Ny)
\end{equation}
where we introduced
\begin{equation*}
	h_1(y)=\frac{4}{\pi}\ln\frac{\sqrt{\beta(iy-\alpha)}+\sqrt{-\alpha(\beta-iy)}}{\sqrt{\beta-\alpha}},\ \ \ h_2(y)=4\ln\frac{\sqrt{iy-\alpha}+i\sqrt{\beta-iy}}{\sqrt{\beta-\alpha}}
\end{equation*}
and with $z=-iy,0<y<\varepsilon$ also
\begin{equation}\label{finalvertical2}
	j_2(-iy)=2ie^{-N(-\frac{2}{\pi}y\ln y+yh_3(y)+h_4(y))}\sin(Ny)
\end{equation}
with
\begin{equation*}
	h_3(y)=\frac{4}{\pi}\ln\frac{\sqrt{\beta(-iy-\alpha)}+\sqrt{-\alpha(\beta+iy)}}{\sqrt{\beta-\alpha}},\ \ \ h_4(y)=-4\ln\frac{\sqrt{-iy-\alpha}+i\sqrt{\beta+iy}}{\sqrt{\beta-\alpha}}.
\end{equation*}
Hence, combining \eqref{finalvertical1} and \eqref{finalvertical2}
\begin{equation*}
	j_1(z)=a_1(iy)e^{-\frac{2N}{\pi}y|\ln y|}\sin(Ny),\hspace{0.3cm} \ j_2(z)=a_2(-iy)e^{-\frac{2N}{\pi}y|\ln(y)|}\sin(Ny),\ \  0<y<\varepsilon
\end{equation*}
with functions $a_i(z)$ which are analytic in a full neighborhood of the origin, satisfyting $a_i(0)\neq 0$. As a result of an optimization consideration, the function
\begin{equation*}
	f(y) = e^{\frac{2N}{\pi}y\ln y}\sin(Ny),\ \ y\in[0,\varepsilon)
\end{equation*}
satisfies the important estimation
\begin{equation*}
	f(y) = O\bigg(\frac{1}{\ln N}\bigg),\ \ N\rightarrow\infty,\ \ y\in[0,\varepsilon),
\end{equation*}
hence also on the vertical line segment, the given jump matrices will, as $N\rightarrow\infty$, eventually approach the identity matrix. All together we expect, and this will be justified rigorously, that as $N\rightarrow\infty$, $S(z)$ converges to a solution of the model RHP, in which we only have to deal with the constant jump matrix on the punctured line segment $[\alpha,0)\cup(0,\beta]$. Let us now consider this model RHP.


\section{The Model RHP}

Find the piecewise analytic $2\times 2$ matrix valued function $M(z)$ such that
\begin{itemize}
	\item $M(z)$ is analytic for $z\in\mathbb{C}\backslash[\alpha,\beta]$
	\item Along $[\alpha,\beta]$, the following jump holds
	\begin{equation*}
		M_+(z) = M_-(z)\bigl(\begin{smallmatrix}
		\ 0 & 1\\
		-1 & \ 0\\
		\end{smallmatrix}\bigr),\hspace{0.5cm}z\in[\alpha,\beta]
	\end{equation*}
	\item $M(z)$ has at most logarithmic singularities at the endpoints $z=\alpha,\beta$
	\item $M(z)=I+O\big(z^{-1}\big),z\rightarrow\infty$
\end{itemize}
A solution to this problem can be obtained by diagonalization
\begin{equation}\label{permutationX}
	M(z)=\begin{pmatrix}
	-i & i\\
	1 & 1\\
	\end{pmatrix}\delta(z)^{-\sigma_3}\frac{i}{2}\begin{pmatrix}
	1 & -i\\
	-1 & -i\\
	\end{pmatrix}=\frac{1}{2}\begin{pmatrix}
	\delta+\delta^{-1} & i(\delta-\delta^{-1})\\
	-i(\delta-\delta^{-1}) & \delta+\delta^{-1}\\
	\end{pmatrix}
\end{equation}
with
\begin{equation*}
	\delta(z) = \bigg(\frac{z-\alpha}{z-\beta}\bigg)^{1/4}
\end{equation*}
defined on $\mathbb{C}\backslash[\alpha,\beta]$ with its branch fixed by the condition $\big(\frac{z-\alpha}{z-\beta}\big)^{1/4}\rightarrow 1$ as $z\rightarrow\infty$.


\section{Construction of a parametrix at the edge point $z=\beta$}

Fix a small neighborhood $\mathcal{U}$ of the point $\beta$ and observe that
\begin{eqnarray*}
	g_+(z)+g_-(z)-V(z)-l &=&\frac{4z}{\pi}\ln\frac{\sqrt{\beta(z-\alpha)}-i\sqrt{-\alpha(z-\beta)}}{\sqrt{z(\beta-\alpha)}}\\
	&&+4\ln\frac{\sqrt{z-\alpha}+\sqrt{z-\beta}}{\sqrt{\beta-\alpha}}\\
	&=&-c_0(z-\beta)^{3/2}+O\big((z-\beta)^2\big),\ \ \ c_0=\frac{8}{3\beta\sqrt{\beta-\alpha}}>0
\end{eqnarray*} 
as $z\in\mathcal{U},z>\beta$. Simultaneously,
\begin{equation*}
	G(z)=2g(z)-V(z)-l = -c_0(z-\beta)^{3/2}+O\big((z-\beta)^2\big),
\end{equation*}
as $z\in\mathcal{U}\cap \gamma_1^+$ and
\begin{equation*}
	G(z)=-2g(z)+V(z)+l=c_0(z-\beta)^{3/2}+O\big((z-\beta)^2\big),\ \ z\in\mathcal{U}\cap \gamma_4^-
\end{equation*}
where the function $(z-\beta)^{3/2}$ is defined for $z\in\mathbb{C}\backslash(-\infty,\beta]$ and fixed by the condition
\begin{equation*}
	(z-\beta)^{3/2}>0\ \ \textnormal{if}\ z>\beta.
\end{equation*}
The stated local behaviors suggest to use the Airy function $\textnormal{Ai}(\zeta)$ in the construction of an edge point parametrix. To this end first recall (see for instance \cite{BE}) that the function $\textnormal{Ai}(\zeta)$ is a solution to the Airy equation
\begin{equation*}
	w''=zw
\end{equation*}
uniquely fixed by its asymptotics as $\zeta\rightarrow\infty$ and $-\pi<\textnormal{arg}\ \zeta<\pi$
\begin{equation*}
	\textnormal{Ai}(\zeta)=\frac{\zeta^{-1/4}}{2\sqrt{\pi}}e^{-\frac{2}{3}\zeta^{3/2}}\Big(1-\frac{5}{48}\zeta^{-3/2}+\frac{385}{4608}\zeta^{-6/2}+O\big(\zeta^{-9/2}\big)\Big)
\end{equation*}
as well as for $-\frac{5\pi}{3}<\textnormal{arg}\ \zeta<-\frac{\pi}{3}$
\begin{eqnarray*}
	\textnormal{Ai}(\zeta) &=&\frac{\zeta^{-1/4}}{2\sqrt{\pi}}e^{-\frac{2}{3}\zeta^{3/2}}\Big(1-\frac{5}{48}\zeta^{-3/2}+\frac{385}{4608}\zeta^{-6/2}+O\big(\zeta^{-9/2}\big)\Big)\\
	&&-\frac{i\zeta^{-1/4}}{2\sqrt{\pi}}e^{\frac{2}{3}\zeta^{3/2}}\Big(1+\frac{5}{48}\zeta^{-3/2}+\frac{385}{4608}\zeta^{-6/2}+O\big(\zeta^{-9/2}\big)\Big),\ \ \zeta\rightarrow\infty.
\end{eqnarray*}
Now introduce for $\zeta\in\mathbb{C}$
\begin{equation}\label{A0}
	A_0(\zeta) = \begin{pmatrix}
	\frac{d}{d\zeta}\textnormal{Ai}(\zeta) & e^{i\frac{\pi}{3}}\frac{d}{d\zeta}\textnormal{Ai}\Big(e^{-i\frac{2\pi}{3}}\zeta\Big)\\
	\textnormal{Ai}(\zeta) & e^{i\frac{\pi}{3}}\textnormal{Ai}\Big(e^{-i\frac{2\pi}{3}}\zeta\Big)\\
	\end{pmatrix}
\end{equation}
and observe from the previously stated asymptotics that as $\zeta\rightarrow\infty$ for $0<\textnormal{arg}\ \zeta<\pi$
\begin{eqnarray*}
	A_0(\zeta) &=& \frac{\zeta^{\sigma_3/4}}{2\sqrt{\pi}}\begin{pmatrix}
	-1 & i\\
	1 & i\\
	\end{pmatrix}\Bigg[I+\frac{1}{48\zeta^{3/2}}\begin{pmatrix}
	1 & 6i\\
	6i & -1\\
	\end{pmatrix}+\frac{35}{4608\zeta^{6/2}}\begin{pmatrix}
	-1 & 12i\\
	-12i & -1\\
	\end{pmatrix}\\
	&&+O\big(\zeta^{-9/2}\big)\Bigg]e^{-\frac{2}{3}\zeta^{3/2}\sigma_3}.
\end{eqnarray*}
On the other hand if $-\pi<\textnormal{arg}\ \zeta<0$, we have
\begin{eqnarray*}
	A_0(\zeta) &=& \frac{\zeta^{\sigma_3/4}}{2\sqrt{\pi}}\begin{pmatrix}
	-1 & i\\
	1 & i\\
	\end{pmatrix}\Bigg[I+\frac{1}{48\zeta^{3/2}}\begin{pmatrix}
	1 & 6i\\
	6i & -1\\
	\end{pmatrix}
	+\frac{35}{4608\zeta^{6/2}}\begin{pmatrix}
	-1 & 12i\\
	-12i & -1\\
	\end{pmatrix}\\
	&&+O\big(\zeta^{-9/2}\big)\Bigg]e^{-\frac{2}{3}\zeta^{3/2}\sigma_3}\begin{pmatrix}
	1 & 1\\
	0 & 1\\
	\end{pmatrix},\ \ \ \zeta\rightarrow\infty.
\end{eqnarray*}
We assemble the following model function
\begin{equation}\label{ARH}
		A^{RH}(\zeta) = \left\{
                                 \begin{array}{ll}
                                   A_0(\zeta), & \hbox{arg $\zeta\in(0,\frac{2\pi}{3})$,} \smallskip\\
                                   A_0(\zeta)\begin{pmatrix}
                          1 & 0 \\
                          -1 & 1 \\
                        \end{pmatrix}, & \hbox{arg $\zeta\in(\frac{2\pi}{3},\pi)$,} \bigskip \\
                                   A_0(\zeta)\begin{pmatrix}
                                   1 & -1\\
                                   0 & 1\\
                                   \end{pmatrix},& \hbox{arg $\zeta\in(-\frac{2\pi}{3},0)$,} \bigskip \\
                                   A_0(\zeta)\begin{pmatrix}
                                   0 & -1 \\
                                   1 & 1\\
                                   \end{pmatrix}, & \hbox{arg $\zeta\in(-\pi,-\frac{2\pi}{3})$.}
                                 \end{array}
                               \right.
\end{equation}
leading to the RHP depicted in Figure \ref{fig5}
\begin{figure}[tbh]
  \begin{center}
  \psfragscanon
  \psfrag{1}{\footnotesize{$\begin{pmatrix}
  1 & 1\\
  0 & 1\\
  \end{pmatrix}$}}
  \psfrag{2}{\footnotesize{$\begin{pmatrix}
  1 & 0\\
  1 & 1\\
  \end{pmatrix}$}}
  \psfrag{3}{\footnotesize{$\begin{pmatrix}
  1 & 0\\
  1 & 1\\
  \end{pmatrix}$}}
  \psfrag{4}{\footnotesize{$\begin{pmatrix}
  0 & 1\\
  -1 & 0\\
  \end{pmatrix}$}}
  \includegraphics[width=5cm,height=4cm]{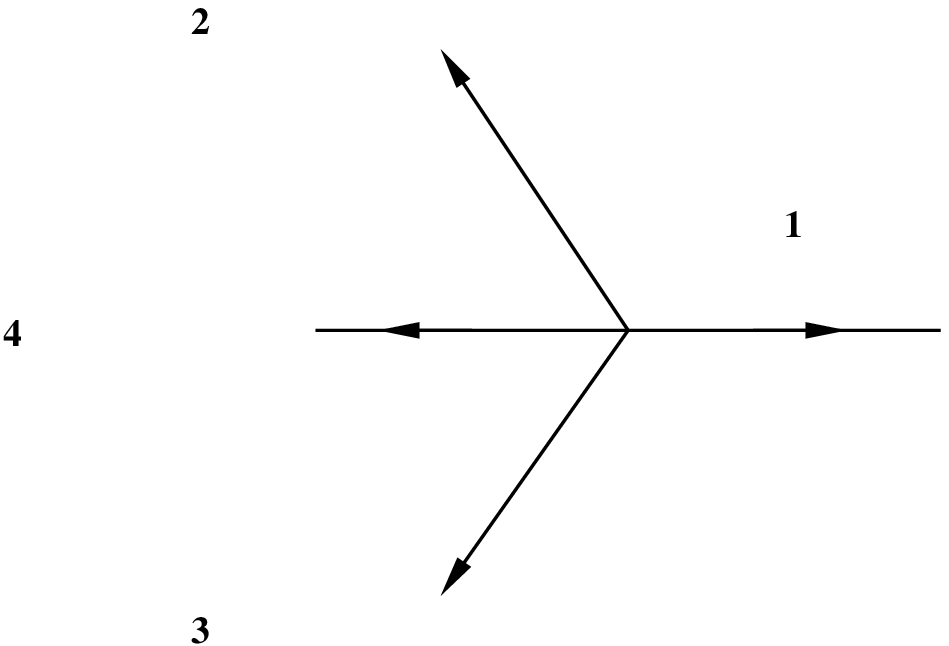}
  \end{center}
  \caption{The model RHP near $z=\beta$ which can be solved explicitly using Airy functions}
  \label{fig5}
\end{figure}
\begin{itemize}
	\item $A^{RH}(\zeta)$ is analytic for $\zeta\in\mathbb{C}\backslash\{\textnormal{arg}\ \zeta=-\frac{2\pi}{3},0,\frac{2\pi}{3},\pi\}$
	\item The following jumps hold, with the contours oriented as shown in Figure \ref{fig5}
	\begin{eqnarray*}
		A^{RH}_+(\zeta)&=&A^{RH}_-(\zeta)\begin{pmatrix}
		1 & 0\\
		-1 & 1\\
		\end{pmatrix},\hspace{0.5cm}\textnormal{arg}\ \zeta=\mp \frac{2\pi}{3}\\
		A^{RH}_+(\zeta)&=&A^{RH}_-(\zeta)\begin{pmatrix}
		1 & 1\\
		0 & 1\\
		\end{pmatrix},\hspace{0.5cm}\textnormal{arg}\ \zeta=0\\
		A^{RH}_+(\zeta)&=&A^{RH}_-(\zeta)\begin{pmatrix}
		0 & -1\\
		1 & 0\\
		\end{pmatrix},\hspace{0.5cm}\textnormal{arg}\ \zeta=\pi
	\end{eqnarray*}
	\item In order to determine the behavior of $A^{RH}(\zeta)$ at infinity we make the following observations. First let $\textnormal{arg}\ \zeta\in(-\pi,-\frac{2\pi}{3})$ and consider
	\begin{equation*}
		e^{-\frac{2}{3}\zeta^{3/2}\sigma_3}\begin{pmatrix}
		1 & 0\\
		1 & 1\\
		\end{pmatrix}e^{\frac{2}{3}\zeta^{3/2}\sigma_3}=\begin{pmatrix}
		1 & 0\\
		e^{\frac{4}{3}\zeta^{3/2}} & 1\\
		\end{pmatrix}
	\end{equation*}
	however here $\textnormal{Re}(\zeta^{3/2})<0$, hence the given product approaches the identity exponentially fast as $\zeta\rightarrow\infty$. Secondly for $\textnormal{arg}\ \zeta\in(\frac{2\pi}{3},\pi):$
	\begin{equation*}
		e^{-\frac{2}{3}\zeta^{3/2}\sigma_3}\begin{pmatrix}
		1 & 0\\
		-1 & 1\\
		\end{pmatrix}e^{\frac{2}{3}\zeta^{3/2}\sigma_3}=\begin{pmatrix}
		1 & 0\\
		-e^{\frac{4}{3}\zeta^{3/2}} & 1\\
		\end{pmatrix}
	\end{equation*}
	and also in this situation $\textnormal{Re}(\zeta^{3/2})<0$. Both cases together with the previously stated asymptotics for $A_0(\zeta)$ imply therefore
	\begin{eqnarray}\label{ARHasyinfinity}
		A^{RH}(\zeta)&=&\frac{\zeta^{\sigma_3/4}}{2\sqrt{\pi}}\begin{pmatrix}
		-1 & i\\
		1 & i\\
		\end{pmatrix}\Bigg[I+\frac{1}{48\zeta^{3/2}}\begin{pmatrix}
	1 & 6i\\
	6i & -1\\
	\end{pmatrix}+\frac{35}{4608\zeta^{6/2}}\begin{pmatrix}
	-1 & 12i\\
	-12i & -1\\
	\end{pmatrix}\nonumber\\
	&&+O\big(\zeta^{-9/2}\big)\Bigg]e^{-\frac{2}{3}\zeta^{3/2}\sigma_3}.
\end{eqnarray}
	as $\zeta\rightarrow\infty$ in a full neighborhood of infinity. 
\end{itemize}
The model function $A^{RH}(\zeta)$ will be useful in the construction of the parametrix to the solution of the $S$-RHP in a neighborhood of $z=\beta$. We proceed in two steps. First define
\begin{equation}\label{changeright}
	\zeta(z)=\bigg(\frac{3N}{4}\bigg)^{2/3}\Big(-2g(z)+V(z)+l\Big)^{2/3},\hspace{0.5cm} |z-\beta|<r.
\end{equation}
This change of variables is locally conformal, since
\begin{equation*}
	\zeta(z)=\bigg(\frac{2N}{\beta\sqrt{\beta-\alpha}}\bigg)^{2/3}(z-\beta)\big(1+O(z-\beta)\big),\ \ \ |z-\beta|<r,
\end{equation*}
and it enables us to define the right parametrix $U(z)$ near $z=\beta$ by the formula:
\begin{equation}\label{pararight}
	U(z) = B_r(z)(-i\sqrt{\pi})A^{RH}\big(\zeta(z)\big)e^{\frac{2}{3}\zeta^{3/2}(z)\sigma_3},\hspace{0.5cm}|z-\beta|<r
\end{equation}
with $\zeta(z)$ as in \eqref{changeright} and the matrix multiplier
\begin{equation}\label{multiplierright}
	B_r(z) = \begin{pmatrix}
	-i & i\\
	1 & 1\\
	\end{pmatrix}\bigg(\zeta(z)\frac{z-\alpha}{z-\beta}\bigg)^{-\sigma_3/4},\ \  B_r(\beta)=\begin{pmatrix}
	-i & i\\
	1 & 1\\
	\end{pmatrix}\bigg(\frac{2N}{\beta}(\beta-\alpha)\bigg)^{-\sigma_3/6}.
\end{equation}
By construction, in particular since $B_r(z)$ is analytic in a neighborhood of $z=\beta$, the parametrix $U(z)$ has jumps along the curves depicted in Figure \ref{fig6}, and we can always locally match the latter curves with the jump curves of the original RHP.
\begin{figure}[tbh]
  \begin{center}
  \psfragscanon
  \psfrag{2}{\footnotesize{$\begin{pmatrix}
  1 & 0\\
  e^{-NG(z)} & 1\\
  \end{pmatrix}$}}
  \psfrag{4}{\footnotesize{$\begin{pmatrix}
  1 & 0\\
  e^{NG(z)} & 1\\
  \end{pmatrix}$}}
  \psfrag{1}{\footnotesize{$\begin{pmatrix}
  1 & e^{N(g_++g_--V-l)}\\
  0 & 1\\
  \end{pmatrix}$}}
  \psfrag{3}{\footnotesize{$\begin{pmatrix}
  0 & 1\\
  -1 & 0\\
  \end{pmatrix}$}}
  \includegraphics[width=6cm,height=3cm]{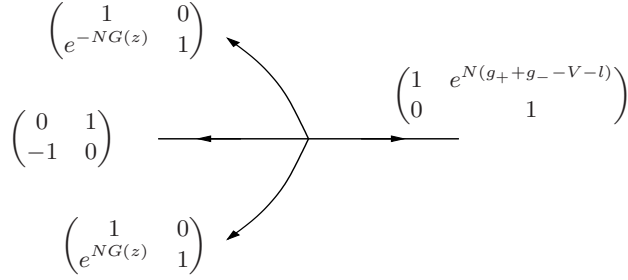}
  \end{center}
  \caption{Transformation of parametrix jumps to original jumps}
  \label{fig6}
\end{figure}

Also these jumps are described by the same matrices as in the original RHP, indeed
\begin{eqnarray*}
	e^{-\frac{2}{3}\zeta^{3/2}(z)\sigma_3}\begin{pmatrix}
	1 & 0\\
	1 & 1\\
	\end{pmatrix}e^{\frac{2}{3}\zeta^{3/2}(z)\sigma_3} &=& \begin{pmatrix}
	1 & 0\\
	e^{NG(z)} & 1\\
	\end{pmatrix},\ \ z\in\mathcal{U}\cap \gamma_4^-\\
	e^{-\frac{2}{3}\zeta^{3/2}(z)\sigma_3}\begin{pmatrix}
	1 & 1\\
	0 & 1\\
	\end{pmatrix}e^{\frac{2}{3}\zeta^{3/2}(z)\sigma_3} &=& \begin{pmatrix}
	1 & e^{N(g_++g_--V-l)}\\
	0 & 1\\
	\end{pmatrix},\ \ z\in\mathcal{U}\cap(\beta,\infty)\\
	e^{-\frac{2}{3}\zeta^{3/2}(z)\sigma_3}\begin{pmatrix}
	1 & 0\\
	1 & 1\\
	\end{pmatrix}e^{\frac{2}{3}\zeta^{3/2}(z)\sigma_3} &=& \begin{pmatrix}
	1 & 0\\
	e^{-NG(z)} & 1\\
	\end{pmatrix},\ \ z\in\mathcal{U}\cap \gamma_1^+\\
	e^{-\frac{2}{3}\zeta^{3/2}(z)\sigma_3}\begin{pmatrix}
	0 & 1\\
	-1 & 0\\
	\end{pmatrix}e^{\frac{2}{3}\zeta^{3/2}(z)\sigma_3} &=& \begin{pmatrix}
	0 & 1\\
	-1 & 0\\
	\end{pmatrix},\ \ z\in\mathcal{U}\cap (\alpha,\beta),
\end{eqnarray*}
hence the ratio of $S(z)$ with $U(z)$ is locally analytic, i.e.
\begin{equation}\label{localratioright}
	S(z)=N_r(z)U(z),\ \ |z-\beta|<r<\frac{\beta}{2}.
\end{equation}
Let us explain the role of the left multiplier $B_r(z)$ in the definition \eqref{pararight}. Observe that
\begin{equation*}
	B_r(z)\Big(-\frac{i}{2}\Big)\zeta^{\sigma_3/4}(z)\begin{pmatrix}
	-1 & i\\
	1 & i\\
	\end{pmatrix} = \begin{pmatrix}
	-i & i\\
	1 & 1\\
	\end{pmatrix}\bigg(\frac{z-\alpha}{z-\beta}\bigg)^{-\sigma_3/4}\frac{i}{2}\begin{pmatrix}
	1 & -i\\
	-1 & -i\\
	\end{pmatrix} = M(z).
\end{equation*}
This relation together with the asympotics \eqref{ARHasyinfinity} implies that,
\begin{eqnarray}\label{modelmatchright}
	U(z)&=&M(z)\bigg[I+\frac{1}{48\zeta^{3/2}}\begin{pmatrix}
	1 & 6i\\
	6i & -1\\
	\end{pmatrix}+\frac{35}{4608\zeta^{6/2}}\begin{pmatrix}
	-1 & 12i\\
	-12i & -1\\
	\end{pmatrix}+O\big(\zeta^{-9/2}\big)\Bigg]\nonumber\\
	&=&\Bigg[I+\frac{1}{96\zeta^{3/2}}\begin{pmatrix}
	7\delta^{-2}-5\delta^2 & i(7\delta^{-2}+5\delta^2)\\
	i(7\delta^{-2}+5\delta^2) & -(7\delta^{-2}-5\delta^2)\\
	\end{pmatrix}+\frac{35}{4608\zeta^{6/2}}\begin{pmatrix}
	-1 & 12i\\
	-12i & -1\\
	\end{pmatrix}\nonumber\\
	&&+O\big(\zeta^{-9/2}\big)\bigg]M(z)
\end{eqnarray}
as $N\rightarrow\infty$ and $0<r_1\leq |z-\beta|\leq r_2<\frac{\beta}{2}$ (so $|\zeta|\rightarrow\infty$). Since the function $\zeta(z)$ is of order $N^{2/3}$ on the latter annulus and $\delta(z)$ is bounded, equation \eqref{modelmatchright} yields the matching relation between the model functions $U(z)$ and $M(z)$,
\begin{equation*}
	U(z) = \big(I+o(1)\big)M(z),\hspace{0.5cm}N\rightarrow\infty,\ 0<r_1\leq |z-\beta|\leq r_2<\frac{\beta}{2}
\end{equation*}
which is crucial for the succesful implementation of the nonlinear steepest descent method as we shall see later on. This is the reason for chosing the left multiplier $B_r(z)$ in \eqref{pararight} in the form \eqref{multiplierright}.


\section{Construction of a parametrix at the edge point $z=\alpha$}

This time the construction is similar to the one given in the last subsection. First introduce for $\zeta\in\mathbb{C}$
\begin{equation*}
	\tilde{A}_0(\zeta)=-\bigl(\begin{smallmatrix}
	0 & 1\\
	1 & 0\\
	\end{smallmatrix}\bigr)\sigma_3A_0\big(e^{-i\pi}\zeta\big)\sigma_3
\end{equation*}
and obtain for $0<\textnormal{arg}\ \zeta<\pi$ as $\zeta\rightarrow\infty$
\begin{eqnarray*}
	\tilde{A}_0(\zeta) &=& \frac{\big(e^{-i\pi}\zeta\big)^{-\sigma_3/4}}{2\sqrt{\pi}}\begin{pmatrix}
	1 & -i\\
	1 & i\\
	\end{pmatrix}\Bigg[I+\frac{i}{48\zeta^{3/2}}\begin{pmatrix}
	-1 & 6i\\
	6i & 1\\
	\end{pmatrix}+\frac{35}{4608\zeta^{6/2}}\begin{pmatrix}
	1 & 12i\\
	-12i&1\\
	\end{pmatrix}\\
	&&+O\big(\zeta^{-9/2}\big)\Bigg]
	 e^{-\frac{2}{3}i\zeta^{3/2}\sigma_3}\begin{pmatrix}
	1 & -1\\
	0 & 1\\
	\end{pmatrix}
\end{eqnarray*}
as well as for $\pi<\textnormal{arg}\ \zeta<2\pi$
\begin{eqnarray*}
	\tilde{A}_0(\zeta) &=& \frac{\big(e^{-i\pi}\zeta\big)^{-\sigma_3/4}}{2\sqrt{\pi}}\begin{pmatrix}
	1 & -i\\
	1 & i\\
	\end{pmatrix}\Bigg[I+\frac{i}{48\zeta^{3/2}}\begin{pmatrix}
	-1 & 6i\\
	6i & 1\\
	\end{pmatrix}+\frac{35}{4608\zeta^{6/2}}\begin{pmatrix}
	1 & 12i\\
	-12i&1\\
	\end{pmatrix}\\
	&&+O\big(\zeta^{-9/2}\big)\Bigg]
	 e^{-\frac{2}{3}i\zeta^{3/2}\sigma_3},\ \ \zeta\rightarrow\infty.
\end{eqnarray*}
Next, instead of \eqref{ARH}, define
\begin{equation}\label{tildeARH}
	\tilde{A}^{RH}(\zeta) = \left\{
                                 \begin{array}{ll}
                                   \tilde{A}_0(\zeta)\begin{pmatrix}
                                   0 & 1\\
                                   -1 & 1\\
                                   \end{pmatrix}, & \hbox{arg $\zeta\in(0,\frac{\pi}{3})$,} \smallskip\\
                                   \tilde{A}_0(\zeta)\begin{pmatrix}
                          1 & 1 \\
                          0 & 1 \\
                        \end{pmatrix}, & \hbox{arg $\zeta\in(\frac{\pi}{3},\pi)$,} \smallskip \\
                                   \tilde{A}_0(\zeta),& \hbox{arg $\zeta\in(\pi,\frac{5\pi}{3})$,} \smallskip \\
                                   \tilde{A}_0(\zeta)\begin{pmatrix}
                                   1 & 0 \\
                                   1 & 1\\
                                   \end{pmatrix}, & \hbox{arg $\zeta\in(\frac{5\pi}{3},2\pi)$.}
                                 \end{array}
                               \right.
\end{equation}
which solves the RHP of Figure \ref{fig7}
\begin{figure}[tbh]
  \begin{center}
  \psfragscanon
  \psfrag{1}{\footnotesize{$\begin{pmatrix}
  0 & 1\\
  -1 & 0\\
  \end{pmatrix}$}}
  \psfrag{2}{\footnotesize{$\begin{pmatrix}
  1 & 0\\
  1 & 1\\
  \end{pmatrix}$}}
  \psfrag{3}{\footnotesize{$\begin{pmatrix}
  1 & 0\\
  1 & 1\\
  \end{pmatrix}$}}
  \psfrag{4}{\footnotesize{$\begin{pmatrix}
  1 & 1\\
  0 & 1\\
  \end{pmatrix}$}}
  \includegraphics[width=5cm,height=4cm]{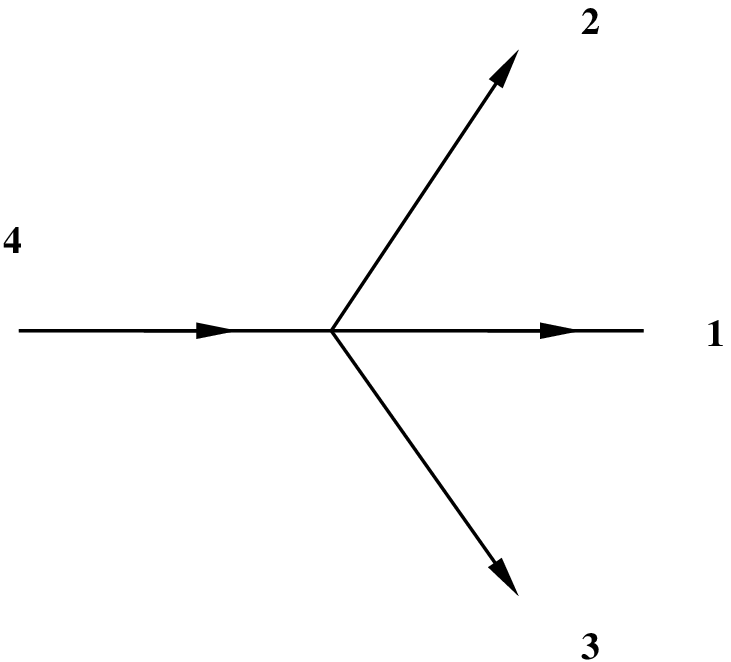}
  \end{center}
  \caption{The model RHP near $z=\alpha$ which can be solved explicitly using Airy functions}
  \label{fig7}
\end{figure}
\begin{itemize}
	\item $\tilde{A}^{RH}(\zeta)$ is analytic for $\zeta\in\mathbb{C}\backslash\{\textnormal{arg}\ \zeta=0,\frac{\pi}{3},\pi,\frac{5\pi}{3}\}$
	\item We have the following jumps on the contour depicted in Figure \ref{fig7}
	\begin{eqnarray*}
		\tilde{A}^{RH}_+(\zeta)&=&\tilde{A}^{RH}_-(\zeta)\begin{pmatrix}
		0 & 1\\
		-1 & 0\\
		\end{pmatrix},\hspace{0.5cm}\textnormal{arg}\ \zeta=0\\
		\tilde{A}^{RH}_+(\zeta)&=&\tilde{A}^{RH}_-(\zeta)\begin{pmatrix}
		1 & 0\\
		1 & 1\\
		\end{pmatrix},\hspace{0.5cm}\textnormal{arg}\ \zeta=\frac{\pi}{3},\frac{5\pi}{3}\\
		\tilde{A}^{RH}_+(\zeta)&=&\tilde{A}^{RH}_-(\zeta)\begin{pmatrix}
		1 & 1\\
		0 & 1\\
		\end{pmatrix},\hspace{0.5cm}\textnormal{arg}\ \zeta=\pi
	\end{eqnarray*}
	\item A similar argument as given in the construction of $A^{RH}(\zeta)$ implies
	\begin{eqnarray}\label{tildeARHasyinfinity}
		\tilde{A}_{RH}(\zeta)&=&\frac{\big(e^{-i\pi}\zeta\big)^{-\sigma_3/4}}{2\sqrt{\pi}}\begin{pmatrix}
	1 & -i\\
	1 & i\\
	\end{pmatrix}\Bigg[I+\frac{i}{48\zeta^{3/2}}\begin{pmatrix}
	-1 & 6i\\
	6i & 1\\
	\end{pmatrix}\nonumber\\
	&&+\frac{35}{4608\zeta^{6/2}}\begin{pmatrix}
	1 & 12i\\
	-12i&1\\
	\end{pmatrix}
	+O\big(\zeta^{-9/2}\big)\Bigg]
	 e^{-\frac{2}{3}i\zeta^{3/2}\sigma_3},\ \ \zeta\rightarrow\infty
	\end{eqnarray}
	valid in a full neighborhood of infinity.
\end{itemize}
Again we use the model function $\tilde{A}^{RH}(\zeta)$ in the construction of the parametrix to the solution of the $S$-RHP near $z=\alpha$. Instead of \eqref{changeright}
\begin{equation}\label{changeleft}
	\zeta(z)=e^{i\pi}\bigg(\frac{3N}{4}\bigg)^{2/3}\Big(-2g(z)+V(z)+l+2\pi i\ \textnormal{sgn}(\textnormal{Im}z)\Big)^{2/3},\hspace{0.5cm} |z-\alpha|<r.
\end{equation}
This change of the independent variable is locally conformal
\begin{equation*}
	\zeta(z) = \bigg(\frac{2N}{(-\alpha)\sqrt{\beta-\alpha}}\bigg)^{2/3}(z-\alpha)\big(1+O(z-\alpha)\big),\hspace{0.5cm}|z-\alpha|<r
\end{equation*}
and allows us to define the left parametrix $X^l(z)$ near $z=\alpha$ by the formula:
\begin{equation}\label{paraleft}
	V(z) = B_l(z)i\sqrt{\pi}\sigma_3\tilde{A}^{RH}\big(\zeta(z)\big)e^{\frac{2}{3}i\zeta^{3/2}(z)\sigma_3},\hspace{0.5cm}|z-\alpha|<r
\end{equation}
with the matrix multiplier
\begin{equation}\label{multiplierleft}
	B_l(z)=\begin{pmatrix}
	-i & i\\
	1 & 1\\
	\end{pmatrix}\bigg(e^{-i\pi}\zeta(z)\frac{z-\beta}{z-\alpha}\bigg)^{\sigma_3/4},\ B_l(\alpha)=\begin{pmatrix}
	-i & i\\
	1 & 1\\
	\end{pmatrix}\bigg(\frac{2N}{-\alpha}(\beta-\alpha)\bigg)^{\sigma_3/6}.
\end{equation}
Similar to the situation in the last subsection, $V(z)$ has jumps on the contour depicted in Figure \ref{fig8} which are described by exactly the same jump matrices as in the $S$-RHP, hence the ratio of parametrix $V(z)$ with $S(z)$ is locally analytic
\begin{equation}\label{localratioleft}
	S(z)= N_l(z)V(z),\ \ |z-\alpha|<r<\frac{|\alpha|}{2}
\end{equation}
\begin{figure}[tbh]
  \begin{center}
  \psfragscanon
  \psfrag{2}{\footnotesize{$\begin{pmatrix}
  1 & 0\\
  e^{-NG(z)} & 1\\
  \end{pmatrix}$}}
  \psfrag{3}{\footnotesize{$\begin{pmatrix}
  1 & 0\\
  e^{NG(z)} & 1\\
  \end{pmatrix}$}}
  \psfrag{4}{\footnotesize{$\begin{pmatrix}
  1 & e^{N(g_++g_--V-l)}\\
  0 & 1\\
  \end{pmatrix}$}}
  \psfrag{1}{\footnotesize{$\begin{pmatrix}
  0 & 1\\
  -1 & 0\\
  \end{pmatrix}$}}
  \includegraphics[width=6cm,height=3cm]{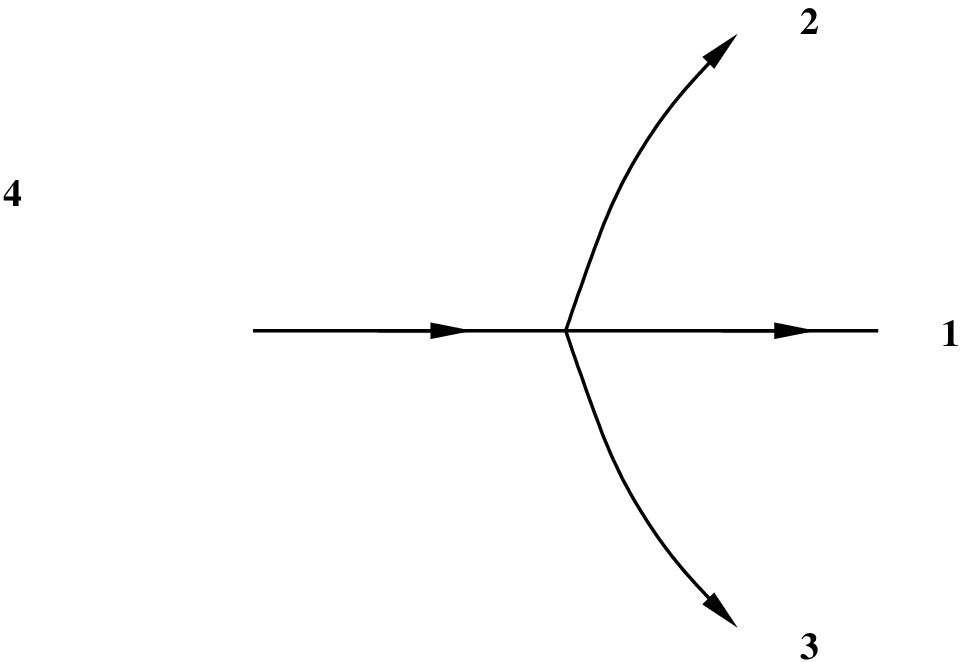}
  \end{center}
  \caption{Transformation of parametrix jumps to original jumps}
  \label{fig8}
\end{figure}
and the left multiplier \eqref{multiplierleft} in \eqref{paraleft} provides us with the following asymptotic matchup between $V(z)$ and $M(z)$
\begin{eqnarray}\label{modelmatchleft}
	V(z) &=& M(z)\bigg[I+\frac{i}{48\zeta^{3/2}}\begin{pmatrix}
	-1 & 6i\\
	6i & 1\\
	\end{pmatrix}+\frac{35}{4608\zeta^{6/2}}\begin{pmatrix}
	1 & 12i\\
	-12i & 1\\
	\end{pmatrix}+O\big(\zeta^{-9/2}\big)\bigg]\nonumber\\
	&=&\Bigg[I+\frac{i}{96\zeta^{3/2}}\begin{pmatrix}
	5\delta^{-2}-7\delta^2 & i(5\delta^{-2}+7\delta^2)\\
	i(5\delta^{-2}+7\delta^2) & -(5\delta^{-2}-7\delta^2)\\
	\end{pmatrix}+\frac{35}{4608\zeta^{6/2}}\begin{pmatrix}
	1 & 12i\\
	-12i & 1\\
	\end{pmatrix}\nonumber\\
	&&+O\big(\zeta^{-9/2}\big)\Bigg]M(z)
\end{eqnarray}
as $N\rightarrow\infty$ and $0<r_1\leq |z-\alpha|\leq r_2<\frac{|\alpha|}{2}$, thus
\begin{equation*}
	V(z) = \big(I+o(1)\big)M(z),\hspace{0.5cm}N\rightarrow\infty,\ \ 0<r_1\leq |z-\alpha|\leq r_2<\frac{|\alpha|}{2}.
\end{equation*}
At this point we can use the model functions $M(z),U(z)$ and $V(z)$ to employ the final transformation.


\section{Third and final transformation of the RHP}

In this final transformation we put
\begin{equation}\label{ratioRHP}
	R(z)=S(z)\left\{
                                   \begin{array}{ll}
                                     \big(V(z)\big)^{-1}, & \hbox{$|z-\alpha|<r$,} \\
                                     \big(U(z)\big)^{-1}, & \hbox{$|z-\beta|<r$,} \\
                                     \big(M(z)\big)^{-1}, & \hbox{$|z-\alpha|>r,\ |z-\beta|>r$}
                                   \end{array}
                                 \right.
\end{equation}
where $0<r<\min\big\{\frac{1}{2},\frac{|\alpha|}{2},\frac{\beta}{2}\big\}$ is fixed. With $C_{\alpha,\beta}$ denoting the clockwise oriented circles shown in Figure \ref{fig9}, the ratio-function $R(z)$ solves the following RHP 
\begin{figure}[tbh]
  \begin{center}
  \psfragscanon
  \psfrag{1}{\footnotesize{$C_{\alpha}$}}
  \psfrag{2}{\footnotesize{$C_{\beta}$}}
  \psfrag{3}{\footnotesize{$\hat{\gamma}_1^+$}}
  \psfrag{4}{\footnotesize{$\hat{\gamma}_2^+$}}
  \psfrag{5}{\footnotesize{$\hat{\gamma}_3^-$}}
  \psfrag{6}{\footnotesize{$\hat{\gamma}_4^-$}}
  \includegraphics[width=9cm,height=3cm]{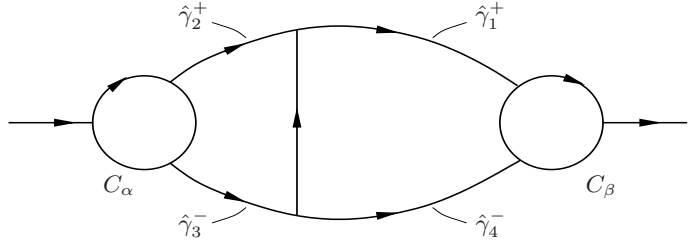}
  \end{center}
  \caption{The jump graph for the ratio function $R(z)$}
  \label{fig9}
\end{figure}
\begin{itemize}
	\item $R(z)$ is analytic for $z\in\mathbb{C}\backslash\big\{ C_{\alpha,\beta}\cup\hat{\Gamma}\cup(-\infty,\alpha-r)\cup(\beta+r,\infty)\big\}$ with $\hat{\Gamma}=\hat{\gamma}_1^+\cup\hat{\gamma}_2^+\cup\hat{\gamma}_3^-\cup\hat{\gamma}_4^-$
	\item For the jumps, along the infinite branches $(-\infty,\alpha-r)\cup(\beta+r,\infty)$
	\begin{equation*}
		R_+(z)=R_-(z)M(z)\begin{pmatrix}
		1 & e^{N(g_++g_--V-l)}\\
		0 & 1\\
		\end{pmatrix}\big(M(z)\big)^{-1},
	\end{equation*}
	on the vertical line segment $(-i\varepsilon,i\varepsilon)$
	\begin{equation*}
		R_+(z) = R_-(z)M(z)\begin{pmatrix}
		1 & 0\\
		j_k(z)& 1\\
		\end{pmatrix}\big(M(z)\big)^{-1},\ \ k=1,2,
	\end{equation*}
	on the upper lense boundary $\hat{\gamma}_1^+\cup\hat{\gamma}_2^+$ respectively lower lense boundary $\hat{\gamma}_3^-\cup\hat{\gamma}_4^-$
	\begin{eqnarray*}
		R_+(z) &=& R_-(z)M(z)\begin{pmatrix}
		1 & 0\\
		e^{-NG(z)} & 1\\
		\end{pmatrix}\big(M(z)\big)^{-1},\ \ \ z\in\hat{\gamma}_1^+\cup\hat{\gamma}_2^+\\
	 R_+(z) &=& R_-(z)M(z)\begin{pmatrix}
		1 & 0\\
		e^{NG(z)}& 1\\
		\end{pmatrix}\big(M(z)\big)^{-1},\ \ \ z\in\hat{\gamma}_3^-\cup\hat{\gamma}_4^-
	\end{eqnarray*}
	and on the clockwise oriented circles $C_{\alpha,\beta}$
	\begin{equation*}
		R_+(z) = R_-(z)\left\{
                                   \begin{array}{ll}
                                     V(z)\big(M(z)\big)^{-1}, & \hbox{$|z-\alpha|=r$,} \\
                                     U(z)\big(M(z)\big)^{-1}, & \hbox{$|z-\beta|=r$.} 
                                   \end{array}
                                 \right.
	\end{equation*}
	\item In a neigborhood of infinity, we have $R(z)\rightarrow I$.
\end{itemize}
Here, by construction, the function $R(z)$ has no jumps inside of $C_{\alpha}$ and $C_{\beta}$ and across the line segment in between. It is important to recall the previously stated behavior of the jump matrices as $N\rightarrow\infty$. In fact, on the lense boundaries, the vertical line segment $(-i\varepsilon,i\varepsilon)$ and the half rays $(-\infty,\alpha-r)\cup(\beta+r,\infty)$ the stated jump matrices approach the identity matrix. Also in virtue of \eqref{modelmatchright} and \eqref{modelmatchleft} the same holds true on the circles $C_{\alpha,\beta}$, together, with $G_R$ denoting the jump matrix in the latter ratio-RHP and $\Sigma_R$ the underlying contour,
\begin{equation}\label{DZesti1}
	\|G_R-I\|_{L^2\cap L^{\infty}(\Sigma_R)}\leq \frac{c}{\ln N},\ \ \ N\rightarrow\infty
\end{equation}
with a constant $c>0$ whose value is not important. The latter estimation enables us to solve the ratio-RHP iteratively.


\section{Solution of the RHP for $R(z)$ via iteration}

The stated RHP for the function $R(z)$
\begin{itemize}
	\item $R(z)$ is analytic for $z\in\mathbb{C}\backslash\Sigma_R$.
	\item Along the contour depicted in Figure \ref{fig9}
	\begin{equation*}
		R_+(z)=R_-(z)G_R(z),\ \ \ z\in\Sigma_R.
	\end{equation*}
	\item As $z\rightarrow\infty$, we have $R(z)=I+O\big(z^{-1}\big)$.
\end{itemize}
is equivalent to the singular integral equation 
\begin{equation}\label{integraleq}
	R_-(z)=I+\frac{1}{2\pi i}\int\limits_{\Sigma_R}R_-(w)\big(G_R(w)-I\big)\frac{dw}{w-z_-}
\end{equation}
and by standard arguments (see \cite{DZ}) we know that for sufficiently large $N$ the relevant integral operator is contracting and equation \eqref{integraleq} can be solved iteratively in $L^2(\Sigma_R)$. Moreover, its unique solution satisfies
\begin{equation}\label{DZesti2}
	\|R_--I\|_{L^2(\Sigma_R)}\leq \frac{c}{\ln N},\ \ \ N\rightarrow\infty.
\end{equation}
Observe that for $z\in\mathbb{C}\backslash\Sigma_R$
\begin{equation}\label{integralasy1}
	R(z)=I+\frac{i}{2\pi z}\int\limits_{\Sigma_R}R_-(w)\big(G_R(w)-I\big)dw+O\big(z^{-2}\big),\ \ z\rightarrow\infty
\end{equation}
and also as $N\rightarrow\infty$ following from \eqref{DZesti2}, \eqref{modelmatchright}, \eqref{modelmatchleft} and \eqref{DZesti1} as well as the previous discussion about exponentially small contributions
\begin{equation}\label{integralasy2}
	\int\limits_{\Sigma_R}R_-(w)\big(G_R(w)-I\big)dw = \int\limits_{C_{\alpha,\beta}}\big(G_R(w)-I\big)dw+\int\limits_{-i\varepsilon}^{i\varepsilon}\big(G_R(w)-I\big)dw+O\big(N^{-2}\big).
\end{equation}
We are now ready to prove the statement on the large $N$ asymptotics of $h_N$ given in Theorem $1$.


\section{Asymptotics of $h_N$ - proof of theorem $1$}

We recall the following identities, see \eqref{constantRHP},
\begin{equation*}
	h_{N,n}=-2\pi i\big(Y_1^{(n)}\big)_{12},\ \ \ \ h_n=N^{2n+1}h_{N,n}
\end{equation*}
with
\begin{equation*}
	Y^{(n)}(z) = \bigg(I+\frac{Y_1^{(n)}}{z}+O\big(z^{-2}\big)\bigg)z^{n\sigma_3},\ \ z\rightarrow\infty.
\end{equation*}
and trace back the transformations
\begin{equation*}
	Y(z)\equiv Y^{(N)}(z) \mapsto T(z) \mapsto S(z) \mapsto R(z).
\end{equation*}
Thus
\begin{equation*}
	Y_1^{(N)} = \lim_{z\rightarrow\infty}\Big(z\big(Y^{(N)}(z)z^{-N\sigma_3}-I\big)\Big) = \lim_{z\rightarrow\infty}\Big(z\big(e^{\frac{Nl}{2}\sigma_3}R(z)M(z)e^{N(g(z)-\frac{l}{2})\sigma_3}z^{-N\sigma_3}-I\big)\Big).
\end{equation*}
and since
\begin{equation*}
	e^{N(g(z)-\frac{l}{2})\sigma_3}z^{-N\sigma_3} = e^{-\frac{Nl}{2}\sigma_3}\Big(I-\frac{N(\alpha+\beta)}{4z}\sigma_3+O\big(z^{-2}\big)\Big),\ \ z\rightarrow\infty
\end{equation*}
and similarly
\begin{equation*}
	M(z)=\begin{pmatrix}
	-i & i\\
	1 & 1\\
	\end{pmatrix}\delta(z)^{-\sigma_3}\frac{i}{2}\begin{pmatrix}
	1 & -i\\
	-1 & -i\\
	\end{pmatrix} = I-\frac{\beta-\alpha}{4z}\sigma_2+O\big(z^{-2}\big),\ \ z\rightarrow\infty
\end{equation*}
we have in combination with \eqref{integralasy1} and \eqref{integralasy2} as $N\rightarrow\infty$
\begin{eqnarray}\label{exactY1}
	&&e^{-\frac{Nl}{2}\sigma_3}Y_1^{(N)}e^{\frac{Nl}{2}\sigma_3} = -N(\beta+\alpha)\frac{\sigma_3}{4}-(\beta-\alpha)\frac{\sigma_2}{4} +\frac{i}{2\pi}\int\limits_{\Sigma_R}R_-(w)\big(G_R(w)-I\big)dw\nonumber\\
	&=&-N(\beta+\alpha)\frac{\sigma_3}{4}-(\beta-\alpha)\frac{\sigma_2}{4} +\frac{i}{2\pi}\int\limits_{C_{\alpha,\beta}}\big(G_R(w)-I\big)dw+\frac{i}{2\pi}\int\limits_{-i\varepsilon}^{i\varepsilon}\big(G_R(w)-I\big)dw\nonumber\\
	&&+O\big(N^{-2}\big)
\end{eqnarray}
We start computing the contributions arising from the circles $C_{\alpha,\beta}$. From 
\eqref{modelmatchright} as $N\rightarrow\infty$
\begin{equation*}
	\int\limits_{C_{\beta}}\big(G_R(w)-I\big)dw=\frac{1}{96}\int\limits_{C_{\beta}}\begin{pmatrix}
	7\delta^{-2}-5\delta^2 & i(7\delta^{-2}+5\delta^2)\\
	i(7\delta^{-2}+5\delta^2) & -(7\delta^{-2}-5\delta^2)\\
	\end{pmatrix}\frac{dw}{\zeta^{3/2}(w)}+O\big(N^{-2}\big),
\end{equation*}
and since
\begin{equation*}
	\zeta(z)=\bigg(\frac{2N}{\beta\sqrt{\beta-\alpha}}\bigg)^{2/3}(z-\beta)\bigg\{1+\frac{5\alpha\beta-3\beta^2-2\alpha^2}{10\beta(\beta-\alpha)^2}(z-\beta)+O\big((z-\beta)^2\big)\bigg\}^{2/3}
\end{equation*}
for $z:|z-\beta|<r$, we obtain via residue theorem
\begin{equation*}
	\int\limits_{C_{\beta}}\frac{\delta^{-2}(w)}{\zeta^{3/2}(w)}dw = -2\pi i\frac{\beta}{2N},\ \ \ \int\limits_{C_{\beta}}\frac{\delta^2(w)}{\zeta^{3/2}(w)}dw = -2\pi i\frac{\beta}{4N}\bigg(1-\frac{5\alpha\beta-3\beta^2-2\alpha^2}{5\beta(\beta-\alpha)}\bigg).
\end{equation*}
Combined together as $N\rightarrow\infty$
\begin{eqnarray}\label{intCr}
	&&\int\limits_{C_{\beta}}\big(G_R(w)-I\big)dw\\
	 &=&-\frac{2\pi i}{192N(\beta-\alpha)}\begin{pmatrix}
	3\beta^2-2\alpha\beta-\alpha^2 & i(11\beta^2-12\alpha\beta+\alpha^2)\\
	i(11\beta^2-12\alpha\beta+\alpha^2) & -(3\beta^2-2\alpha\beta-\alpha^2)\\
	\end{pmatrix}+O\big(N^{-2}\big).\nonumber
\end{eqnarray}
For the integral over $C_{\alpha}$ we follow the same strategy. First from \eqref{modelmatchleft} as $N\rightarrow\infty$
\begin{equation*}
	\int\limits_{C_{\alpha}}\big(G_R(w)-I\big)dw = \frac{i}{96}\int\limits_{C_{\alpha}}\begin{pmatrix}
	5\delta^{-2}-7\delta^2 & i(5\delta^{-2}+7\delta^2)\\
	i(5\delta^{-2}+7\delta^2) & -(5\delta^{-2}-7\delta^2)\\
	\end{pmatrix}\frac{dw}{\zeta^{3/2}(w)} +O\big(N^{-2}\big),
\end{equation*}
and with
\begin{equation*}
	\zeta(z) = e^{i\pi}\bigg(\frac{2N}{-\alpha\sqrt{\beta-\alpha}}\bigg)^{2/3}(\alpha-z)\bigg\{1+\frac{5\alpha\beta-3\alpha^2-2\beta^2}{10(-\alpha)(\beta-\alpha)^2}(\alpha-z)+O\big((z-\alpha)^2\big)\bigg\}^{2/3}
\end{equation*}
for $z:|z-\alpha|<r$, we deduce from residue theorem
\begin{equation*}
	\int\limits_{C_{\alpha}}\frac{\delta^{-2}(w)}{\zeta^{3/2}(w)}dw = 2\pi i\frac{(-\alpha)i}{4N}\bigg(1-\frac{5\alpha\beta-3\alpha^2-2\beta^2}{5(-\alpha)(\beta-\alpha)}\bigg),\ \ \int\limits_{C_{\alpha}}\frac{\delta^2(w)}{\zeta^{3/2}(w)}dw = 2\pi i\frac{(-\alpha)i}{2N}
\end{equation*}
so together
\begin{eqnarray}\label{intCl}
	&&\int\limits_{C_{\alpha}}\big(G_R(w)-I\big)dw\\
	 &=&-\frac{2\pi i}{192N(\beta-\alpha)}\begin{pmatrix}
	 -(3\alpha^2-2\alpha\beta-\beta^2) & i(11\alpha^2-12\alpha\beta+\beta^2)\\
	 i(11\alpha^2-12\alpha\beta+\beta^2) & 3\alpha^2-2\alpha\beta-\beta^2\\
	 \end{pmatrix}+O\big(N^{-2}\big).\nonumber  
\end{eqnarray}
Adding \eqref{intCr} and \eqref{intCl} we have thus
\begin{equation}\label{intAiry}
	\int\limits_{C_{\alpha,\beta}}\big(G_R(w)-I\big)dw = -\frac{2\pi i}{48N}\begin{pmatrix}
	\beta+\alpha & 3i(\beta-\alpha)\\
	3i(\beta-\alpha) & -(\beta+\alpha)\\
	\end{pmatrix} +O\big(N^{-2}\big),\ \ N\rightarrow\infty.
\end{equation}
Let us now move on to the evaluation of the integral
\begin{eqnarray}\label{verticallinesegment}
	\int\limits_{-i\varepsilon}^{i\varepsilon}\big(G_R(w)-I\big)dw&=&
	 \int\limits_0^{i\varepsilon}\frac{j_1(w)}{4}\begin{pmatrix}
	i(\delta^2-\delta^{-2}) & (\delta-\delta^{-1})^2\\
	(\delta+\delta^{-1})^2 & -i(\delta^2-\delta^{-2})\\
	\end{pmatrix}dw\nonumber\\
	&&+\int\limits_{-i\varepsilon}^0\frac{j_2(w)}{4}\begin{pmatrix}
	i(\delta^2-\delta^{-2}) & (\delta-\delta^{-1})^2\\
	(\delta+\delta^{-1})^2 & -i(\delta^2-\delta^{-2})\\
	\end{pmatrix}dw.
\end{eqnarray}
As we see this evaluation requires the asymptotical computation of the integrals
\begin{equation}\label{verticalintegrals}
	\int\limits_0^{i\varepsilon}j_1(w)f(w)dw\ \ \ \ \int\limits_{-i\varepsilon}^0j_2(w)f(w)dw
\end{equation}
for a function $f(w)$ which is analytic on $(-i\varepsilon,0)\cup(0,i\varepsilon)$. Consider the first integral, 
we have from \eqref{finalvertical1}
\begin{equation*}
	\int\limits_0^{i\varepsilon}j_1(w)f(w)dw = -2e^{-Nh_2(0)}\int\limits_0^{\varepsilon}e^{-N(-\frac{2}{\pi}y\ln y+yh_1(y)+h_2(y)-h_2(0))}\sin(Ny)f(iy)dy
\end{equation*}
where
\begin{equation*}
	h_1(y)=\frac{4}{\pi}\ln\frac{\sqrt{\beta(iy-\alpha)}+\sqrt{-\alpha(\beta-iy)}}{\sqrt{\beta-\alpha}},\ \ h_2(y)=4\ln\frac{\sqrt{iy-\alpha}+i\sqrt{\beta-iy}}{\sqrt{\beta-\alpha}}.
\end{equation*}
Let us perform the change of variables
\begin{equation}\label{change1}
	u = u(y) = -\frac{2}{\pi}y\ln y+yh_1(y)+h_2(y)-h_2(0),\ \ y\in[0,\varepsilon).
\end{equation}
We have $u(0)=0$ and would now like to express $y$ as a function of $u$. To this end introduce $v=u/y$, thus from \eqref{change1}
\begin{equation}\label{change2}
	v = -\frac{2}{\pi}\ln u+\frac{2}{\pi}\ln v+h_1\Big(\frac{u}{v}\Big)+\frac{v}{u}\bigg(h_2\Big(\frac{u}{v}\Big)-h_2(0)\bigg)
\end{equation}
and we are going to solve this equation for $v$ by iteration
\begin{equation*}
	v_0=-\frac{2}{\pi}\ln u,\hspace{0.5cm} v_{n+1} = -\frac{2}{\pi}\ln u+\frac{2}{\pi}\ln v_n+h_1\Big(\frac{u}{v_n}\Big)+\frac{v_n}{u}\bigg(h_2\Big(\frac{u}{v_n}\Big)-h_2(0)\bigg).
\end{equation*}
First
\begin{equation*}
	v_1=-\frac{2}{\pi}\ln u+\frac{2}{\pi}\ln\Big(-\frac{2}{\pi}\ln u\Big)+h_1(0)+h_2'(0)+O\bigg(\frac{u}{-\ln u}\bigg),\ \ u\rightarrow 0
\end{equation*}
and secondly
\begin{equation*}
 v_2=-\frac{2}{\pi}\ln u+\frac{2}{\pi}\ln\Big(-\frac{2}{\pi}\ln u\Big)+h_1(0)+h_2'(0)+O\bigg(\frac{\ln(-\ln u)}{-\ln u}\bigg),\ \ u\rightarrow 0.
\end{equation*}
This asymptotic behavior persists for subsequent $v_n$'s, we have a solution to equation \eqref{change2} of the form
\begin{equation*}
	v = -\frac{2}{\pi}\ln u+\frac{2}{\pi}\ln\Big(-\frac{2}{\pi}\ln u\Big)+h_1(0)+h_2'(0)+O\bigg(\frac{\ln(-\ln u)}{-\ln u}\bigg)
\end{equation*}
as $u\rightarrow 0$. This in turn implies for the solution $y=y(u)$ that as $u\to 0$, 
\begin{equation}\label{ychangeu}
	y=\frac{u}{-\frac{2}{\pi}\ln u}\bigg[1-\frac{\ln\big(-\frac{2}{\pi}\ln u\big)}{-\ln u}-\frac{h_1(0)+h_2'(0)}{-\frac{2}{\pi}\ln u}+O\Big(\bigg(\frac{\ln(-\ln u)}{-\ln u}\bigg)^2\Big)\bigg],
\end{equation}
as well as for its derivative
\begin{eqnarray}\label{ychangeuderiv}
	\frac{dy}{du} &=& \bigg(-\frac{2}{\pi}\ln y-\frac{2}{\pi}+h_1(y)+yh_1'(y)+h_2'(y)\bigg)^{-1}\\
	&=&\frac{1}{-\frac{2}{\pi}\ln u}\bigg[1-\frac{\ln\big(-\frac{2}{\pi}\ln u\big)}{-\ln u}+\frac{1-\frac{\pi}{2}\big(h_1(0)+h_2'(0)\big)}{-\ln u}+O\Big(\bigg(\frac{\ln(-\ln u)}{-\ln u}\bigg)^2\Big)\bigg].\nonumber
\end{eqnarray}
At this time we go back to the given integral 
\begin{eqnarray*}
	\int\limits_0^{i\varepsilon}j_1(w)f(w)dw &=& -2e^{-Nh_2(0)}\int\limits_0^{\varepsilon}e^{-N(-\frac{2}{\pi}y\ln y+yh_1(y)+h_2(y)-h_2(0))}\sin(Ny)f(iy)dy\\
	&=&-2e^{-Nh_2(0)}\int\limits_0^{u(\varepsilon)}e^{-Nu}\sin\big(Ny(u)\big)f\big(iy(u)\big)\frac{dy}{du} du
\end{eqnarray*}
and introduce
\begin{equation}\label{AN}
	A(N)=\frac{4}{\pi^2}\int\limits_0^{\varepsilon}e^{-N(-\frac{2}{\pi}y\ln y+yh_1(y)+h_2(y)-h_2(0))}\sin(Ny)dy.
\end{equation}
\begin{proposition}\label{prop1}
As $N\rightarrow\infty$
\begin{eqnarray*}
	A(N)&=&\frac{1}{N(\ln N)^2}\Bigg[1-\frac{2\ln\ln N}{\ln N}+\frac{3-2\gamma-2\ln\big(\frac{2}{\pi}\big)-\pi\big(h_1(0)+h_2'(0)\big)}{\ln N}\\
	&&+O\Big(\bigg(\frac{\ln\ln N}{\ln N}\bigg)^2\Big)\Bigg]
\end{eqnarray*}
\end{proposition}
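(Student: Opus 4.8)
The plan is to turn $A(N)$, via the change of variables $u=u(y)$ from \eqref{change1}, into a Laplace-type integral $\tfrac{4}{\pi^2}\int e^{-Nu}\sin(Ny(u))\frac{dy}{du}\,du$, and then to read off the $1/\ln N$ expansion after the rescaling $t=Nu$. The entire computation will ultimately be reduced to the two identities $\int_0^\infty e^{-t}t^2\,dt=\Gamma(3)=2$ and $\int_0^\infty e^{-t}t^2\ln t\,dt=\Gamma'(3)=\Gamma(3)\psi(3)=3-2\gamma$, the latter being the source of the Euler constant in the statement.

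First, since $u'(y)=-\frac{2}{\pi}\ln y-\frac{2}{\pi}+h_1(y)+yh_1'(y)+h_2'(y)\to+\infty$ as $y\to0^+$, for $\varepsilon$ small enough $u=u(y)$ is a strictly increasing diffeomorphism of $[0,\varepsilon)$ onto $[0,u(\varepsilon))$ with $u(0)=0$, and \eqref{AN} becomes
\[
A(N)=\frac{4}{\pi^2}\int_0^{u(\varepsilon)}e^{-Nu}\sin\big(Ny(u)\big)\frac{dy}{du}\,du .
\]
Because the integrand is bounded, the part $u\in[\delta,u(\varepsilon))$ contributes only $O(e^{-N\delta})$ for any fixed small $\delta>0$ and may be dropped. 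On $[0,\delta]$ I would write $\sin(Ny(u))=Ny(u)+O\big((Ny(u))^3\big)$; since $y(u)\le Cu/|\ln u|$ by \eqref{ychangeu}, after the rescaling the cubic remainder produces only $O\big(N^{-1}(\ln N)^{-4}\big)$, which is $o\big(N^{-1}(\ln N)^{-2}(\ln\ln N/\ln N)^2\big)$ and so is absorbed into the stated error. For the linear part I integrate by parts using $y\,dy=\tfrac12\,d(y^2)$; the boundary term at $u=0$ vanishes because $y(0)=0$, giving
\[
A(N)=\frac{2N^2}{\pi^2}\int_0^{\delta}e^{-Nu}\,y(u)^2\,du+O\big(Ne^{-N\delta}\big).
\]
The point of passing to $y^2$ is that now only the single expansion \eqref{ychangeu} for $y(u)$ is needed, not the one for $dy/du$.

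Next, setting $t=Nu$ gives $A(N)=\tfrac{2N}{\pi^2}\int_0^{N\delta}e^{-t}\,y(t/N)^2\,dt$ up to the above errors; I split the $t$-integral at $t=(\ln N)^2$, the tail being super-exponentially small. On $[0,(\ln N)^2]$ one has $u=t/N\to0$ uniformly, so inserting \eqref{ychangeu} and expanding $-\ln(t/N)=\ln N\,(1-\ln t/\ln N)$ — whence $1/(-\ln u)$, $\ln(-\ln u)$ and $\ln(-\tfrac{2}{\pi}\ln u)$ all expand in powers of $1/\ln N$ with coefficients polynomial in $\ln t$ — a short computation yields
\[
y(t/N)^2=\frac{\pi^2t^2}{4N^2(\ln N)^2}\left[1+\frac{2\ln t-2\ln\ln N-2\ln\tfrac{2}{\pi}-\pi\big(h_1(0)+h_2'(0)\big)}{\ln N}+O\!\Big(\big(\tfrac{\ln\ln N}{\ln N}\big)^2\big(1+(\ln t)^2\big)\Big)\right].
\]
Extending the $t$-integral back to $[0,\infty)$ (again an $O(e^{-N\delta})$ error) and integrating term by term, the $1$ contributes $\int_0^\infty e^{-t}t^2\,dt=2$, the $\ln t$ contributes $\int_0^\infty e^{-t}t^2\ln t\,dt=3-2\gamma$, each constant $\ln\ln N$, $\ln\tfrac{2}{\pi}$, $h_1(0)+h_2'(0)$ comes out times $2$, and the $O$-remainder integrates against $e^{-t}t^2(1+(\ln t)^2)$ to a bounded multiple of $(\ln\ln N/\ln N)^2$. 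Collecting,
\[
A(N)=\frac{1}{2N(\ln N)^2}\left[2+\frac{2(3-2\gamma)-4\ln\ln N-4\ln\tfrac{2}{\pi}-2\pi\big(h_1(0)+h_2'(0)\big)}{\ln N}+O\!\Big(\big(\tfrac{\ln\ln N}{\ln N}\big)^2\Big)\right],
\]
which is exactly the claimed formula after dividing the $2$ through.

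The hard part will not be any single step but the uniform bookkeeping of the remainders: I need to know that \eqref{ychangeu} (and \eqref{ychangeuderiv}, obtained by iterating \eqref{change2}) holds with a genuine $O\big((\ln(-\ln u)/(-\ln u))^2\big)$ bound uniformly for $u\in(0,\delta]$, to check that the substitution $-\ln u=\ln N-\ln t$ only injects factors polynomial in $\ln t$ — which are harmless under $\int_0^\infty e^{-t}t^2(\cdots)\,dt$ — and to verify that truncating $\sin$, truncating at $u=\delta$, and extending the $t$-range each cost only $o\big(N^{-1}(\ln N)^{-2}(\ln\ln N/\ln N)^2\big)$. Once those uniformities are in place, the rest is the routine Watson/Laplace asymptotics packaged by $\Gamma(3)=2$ and $\Gamma'(3)=3-2\gamma$.
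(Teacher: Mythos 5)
Your proof is correct and follows essentially the same route as the paper's: the change of variables $u=u(y)$, the rescaling to a Laplace-type integral, linearization of $\sin(Ny)$, and termwise integration of the $1/\ln N$-expansion against Gamma-function moments. The only real difference is organizational — you integrate by parts so that only $y(u)^2$ appears, needing just the expansion \eqref{ychangeu} and the moments $\Gamma(3)=2$, $\Gamma'(3)=3-2\gamma$, whereas the paper multiplies \eqref{ychangeu} against \eqref{ychangeuderiv} and uses $\Gamma(2)=1$, $\Gamma'(2)=1-\gamma$; this is a tidy simplification but not a different method.
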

\begin{proof} We use the change of variables \eqref{ychangeu}, \eqref{ychangeuderiv} as well as the substitution $s=Nu$,
\begin{eqnarray*}
	A(N) &=&\frac{1}{N}\int\limits_0^{Nu(\varepsilon)}\frac{se^{-s}}{(\ln N-\ln s)^2}\Bigg[1-\frac{2\ln\big(\frac{2}{\pi}\ln N-\frac{2}{\pi}\ln s\big)}{\ln N-\ln s}\\
	&&+\frac{1-\pi\big(h_1(0)+h_2'(0)\big)}{\ln N-\ln s}-\frac{\ln\big(\frac{2}{\pi}\ln N-\frac{2}{\pi}\ln s\big)}{(\ln N-\ln s)^2}\Big(1-\pi\big(h_1(0)+h_2'(0)\big)\Big)\\
	&&+O\Big(\bigg(\frac{\ln(\ln N-\ln s)}{\ln N-\ln s}\bigg)^2\Big)\Bigg]ds
\end{eqnarray*}
Since $u(\varepsilon)$ can be chosen arbitrarily small, we can expand the integrand in reciprocal powers of $\ln N$. This gives as $N\rightarrow\infty$
\begin{eqnarray*}
	A(N) &=&\frac{1}{N(\ln N)^2}\Bigg[I_1(N)-\frac{2\ln\big(\frac{2}{\pi}\ln N\big)}{\ln N}I_1(N)\\
	&&+\frac{1-\pi\big(h_1(0)+h_2'(0)\big)}{\ln N}I_1(N)+\frac{2I_2(N)}{\ln N}+O\Big(\bigg(\frac{\ln\ln N}{\ln N}\bigg)^2\Big)\Bigg]
\end{eqnarray*}
with
\begin{equation*}
	I_1(N)=\int\limits_0^{Nu(\varepsilon)}e^{-s}s\ ds,\ \ \ I_2(N)=\int\limits_0^{Nu(\varepsilon)}e^{-s}s\ln s\ ds.
\end{equation*}
Up to an exponentially small error, we have $I_1(N)=1$ and $I_2(N)=1-\gamma$, where $\gamma$ denotes Euler's constant, thus
\begin{eqnarray}\label{ANfinal}
	A(N)&=&\frac{1}{N(\ln N)^2}\Bigg[1-\frac{2\ln\ln N}{\ln N}+\frac{3-2\gamma -2\ln\big(\frac{2}{\pi}\big)-\pi\big(h_1(0)+h_2'(0)\big)}{\ln N}\nonumber\\
	&&+O\Big(\bigg(\frac{\ln\ln N}{\ln N}\bigg)^2\Big)\Bigg],\ \ \ N\rightarrow\infty.
\end{eqnarray}
\end{proof}
Back to the integral under consideration, we notice from \eqref{ychangeu}
\begin{equation*}
	f\big(iy\Big(\frac{s}{N}\Big)\big)-f_+(0) = O\bigg(\frac{1}{N\ln N}\bigg),\ \ N\rightarrow\infty,
\end{equation*}
hence via Proposition \ref{prop1}
\begin{equation}\label{firstverticalfinal}
	\int\limits_0^{i\varepsilon}j_1(w)f(w)dw=-\frac{\pi^2}{2} e^{-Nh_2(0)}f_+(0)A(N)+O\bigg(\frac{1}{N^2(\ln N)^3}\bigg)
\end{equation}
with
\begin{eqnarray*}
	&&h_1(0)=\frac{4}{\pi}\ln\frac{2\sqrt{(-\alpha)\beta}}{\sqrt{\beta-\alpha}},\ \ h_2(0) = 4\ln\frac{\sqrt{-\alpha}+i\sqrt{\beta}}{\sqrt{\beta-\alpha}}=4i\ \textnormal{arg}\ \frac{\sqrt{-\alpha}+i\sqrt{\beta}}{\sqrt{\beta-\alpha}},\\
	&&\ \ h_2'(0)=\frac{2}{\sqrt{(-\alpha)\beta}}.
\end{eqnarray*}
The second integral in \eqref{verticalintegrals} can be treated in a similar way. Indeed we have
\begin{equation*}
	\int\limits_{-i\varepsilon}^0j_2(w)f(w)dw = -2e^{-Nh_4(0)}\int\limits_0^{\varepsilon}e^{-N\big(-\frac{2}{\pi}y\ln y+yh_3(y)+h_4(y)-h_4(0)\big)}\sin(Ny)f(-iy)dy
\end{equation*}
with
\begin{equation*}
	h_3(y)=\frac{4}{\pi}\ln\frac{\sqrt{\beta(-iy-\alpha)}+\sqrt{-\alpha(\beta+iy)}}{\sqrt{\beta-\alpha}},\ \ h_4(y)=-4\ln\frac{\sqrt{-iy-\alpha}+i\sqrt{\beta+iy}}{\sqrt{\beta-\alpha}}
\end{equation*}
and we deduce as $N\rightarrow\infty$
\begin{eqnarray}\label{secondverticalfinal}
	\int\limits_{-i\varepsilon}^0j_2(w)f(w)dw&=&-\frac{\pi^2}{2}e^{-Nh_4(0)}f_-(0)A(N)+O\bigg(\frac{1}{N^2(\ln N)^3}\bigg),
\end{eqnarray}
with
\begin{equation*}
	h_4(0)=-4i\ \textnormal{arg}\ \frac{\sqrt{-\alpha}+i\sqrt{\beta}}{\sqrt{\beta-\alpha}}.
\end{equation*}
Adding \eqref{firstverticalfinal} and \eqref{secondverticalfinal}, we end up with
\begin{eqnarray}\label{verticalgeneral}
	&&\int\limits_0^{i\varepsilon}j_1(w)f(w)dw+\int\limits_{-i\varepsilon}^0j_2(w)f(w)dw = -\frac{\pi^2}{2}\Big(e^{-Nh_2(0)}f_+(0)+e^{Nh_2(0)}f_-(0)\Big)A(N)\nonumber\\
	&&+O\bigg(\frac{1}{N^2(\ln N)^3}\bigg),\ \ N\rightarrow\infty.
\end{eqnarray}
The latter expansion enables us now to evaluate \eqref{verticallinesegment}. Since
\begin{equation*}
	\delta_+(0)=\bigg(\frac{-\alpha}{\beta}\bigg)^{1/4}e^{-i\frac{\pi}{4}},\ \ \delta_-(0)=\bigg(\frac{-\alpha}{\beta}\bigg)^{1/4}e^{i\frac{\pi}{4}}
\end{equation*}
we obtain
\begin{eqnarray*}
	&&\int\limits_{-i\varepsilon}^{i\varepsilon}\big(G_R(w)-I\big)dw = -\frac{\pi^2}{8}A(N)\times\\
	&&\begin{pmatrix}
	-2i\big((\frac{-\alpha}{\beta})^{1/2}+(\frac{-\alpha}{\beta})^{-1/2}\big)\sin\varphi_N & -4\cos\varphi_N-2\big((\frac{-\alpha}{\beta})^{1/2}-(\frac{-\alpha}{\beta})^{-1/2}\big)\sin\varphi_N\\
	4\cos\varphi_N-2\big((\frac{-\alpha}{\beta})^{1/2}-(\frac{-\alpha}{\beta})^{-1/2}\big)\sin\varphi_N& 2i\big((\frac{-\alpha}{\beta})^{1/2}+(\frac{-\alpha}{\beta})^{-1/2}\big)\sin\varphi_N\\
	\end{pmatrix}\\
	&&+O\bigg(\frac{1}{N^2(\ln N)^3}\bigg)
\end{eqnarray*}
as $N\rightarrow\infty$ with
\begin{equation*}
	\varphi_N=4N\ \textnormal{arg}\frac{\sqrt{-\alpha}+i\sqrt{\beta}}{\sqrt{\beta-\alpha}}.
\end{equation*}
All together from \eqref{exactY1}
\begin{eqnarray*}
	&&e^{-\frac{Nl}{2}\sigma_3}Y_1^{(N)}e^{\frac{Nl}{2}\sigma_3} = 
	-N(\beta+\alpha)\frac{\sigma_3}{4}-(\beta-\alpha)\frac{\sigma_2}{4}+\frac{1}{48N}\begin{pmatrix}
	\beta+\alpha & 3i(\beta-\alpha)\\
	3i(\beta-\alpha)&-(\beta+\alpha)\\
	\end{pmatrix}\\
	&&-\frac{i\pi}{16}A(N)\times\\
	&&\begin{pmatrix}
	-2i\big((\frac{-\alpha}{\beta})^{1/2}+(\frac{-\alpha}{\beta})^{-1/2}\big)\sin\varphi_N & -4\cos\varphi_N-2\big((\frac{-\alpha}{\beta})^{1/2}-(\frac{-\alpha}{\beta})^{-1/2}\big)\sin\varphi_N\\
	4\cos\varphi_N-2\big((\frac{-\alpha}{\beta})^{1/2}-(\frac{-\alpha}{\beta})^{-1/2}\big)\sin\varphi_N& 2i\big((\frac{-\alpha}{\beta})^{1/2}+(\frac{-\alpha}{\beta})^{-1/2}\big)\sin\varphi_N\\
	\end{pmatrix}\\
	&&+O\big(N^{-2}\big),\ \ N\rightarrow\infty
\end{eqnarray*}
and in particular
\begin{eqnarray*}
	\Big(Y_1^{(N)}\Big)_{12} &=& e^{Nl}\Bigg[\frac{i(\beta-\alpha)}{4}+\frac{i(\beta-\alpha)}{16N}+\frac{i\pi}{16}A(N)\bigg(4\cos\varphi_N\\
	&&+2\Big\{\Big(\frac{-\alpha}{\beta}\Big)^{1/2}-\Big(\frac{-\alpha}{\beta}\Big)^{-1/2}\Big\}\sin\varphi_N\bigg)+O\big(N^{-2}\big)\Bigg].
\end{eqnarray*}
The latter expansion allows us to deduce the asymptotics of the normalization constants $h_N$ as $N\rightarrow\infty$. Since
\begin{equation*}
	h_N=N^{2N+1}h_{N,N} = -2\pi iN^{2N+1}\Big(Y_1^{(N)}\Big)_{12}
\end{equation*}
one obtains
\begin{eqnarray*}
	h_N&=&\frac{\pi(\beta-\alpha)}{2} N^{2N+1}e^{Nl}\Bigg[1+\frac{1}{4N}+\frac{\pi A(N)}{4(\beta-\alpha)}\\
	&&\times\bigg(4\cos\varphi_N+2\Big\{\Big(\frac{-\alpha}{\beta}\Big)^{1/2}-\Big(\frac{-\alpha}{\beta}\Big)^{-1/2}\Big\}\sin\varphi_N\bigg)+O\big(N^{-2}\big)\Bigg],\ \ N\rightarrow\infty
\end{eqnarray*}
or in other words, recalling the definitions of $\alpha,\beta$ and $l$ and the identity
\begin{equation*}
	\varphi_N=4N\textnormal{arg}\frac{\sqrt{-\alpha}+i\sqrt{\beta}}{\sqrt{\beta-\alpha}} = 4N\arctan\bigg(\frac{1}{\tan\frac{\pi}{4}(1-x)}\bigg)=\pi N(1+x),
\end{equation*}
we deduce
\begin{eqnarray}\label{hNalone}
	h_N&=&\frac{\pi}{2}\bigg(\frac{2\pi N}{\cos\frac{\pi x}{2}}\bigg)^{2N+1}\frac{e^{-2N}}{16^N}\Bigg[1+\frac{1}{4N}+\frac{(-1)^N}{2}\cos\Big(\pi x\bigg(N+\frac{1}{2}\bigg)\Big)A(N)\nonumber\\
	&&+O\big(N^{-2}\big)\Bigg]
\end{eqnarray}
Here the stated expansion as $N\rightarrow\infty$ is uniform on any compact subset of the set \eqref{excset}. Furthermore by Stirling's approximation
\begin{equation*}
	N! = \bigg(\frac{N}{e}\bigg)^N\sqrt{2\pi N}\Big(1+\frac{1}{12N}+O\big(N^{-2}\big)\Big),\ \ N\rightarrow\infty
\end{equation*}
one obtains
\begin{equation}\label{theo1rewrite}
	\frac{h_N}{(N!)^2} =\bigg(\frac{\pi}{2\cos\frac{\pi t}{2}}\bigg)^{2N+1}e^{b_N-\frac{1}{6N}}
\end{equation}
with
\begin{eqnarray*}
	b_N&=&\frac{1}{4N}+\frac{(-1)^N}{2}\cos\Big(\pi x\bigg(N+\frac{1}{2}\bigg)\Big)A(N)+O\big(N^{-2}\big)\\
	&=&\frac{1}{4N}+\frac{(-1)^N\cos(\pi x\big(N+\frac{1}{2}\big))}{2N(\ln N)^2}\bigg\{1-\frac{2\ln\ln N}{\ln N}+\frac{1-2\gamma-4\ln 2-2\ln\big(\cos\frac{\pi x}{2}\big)}{\ln N}\\
	&&+O\Big(\bigg(\frac{\ln\ln N}{\ln N}\bigg)^2\Big)\bigg\}+O\big(N^{-2}\big),\ \ N\rightarrow\infty,
\end{eqnarray*}
thus proving Theorem \ref{theo1}.

\section{Asymptotics of the partition function $Z_N$ - proof of Theorem \ref{theo2}}

We go back to \eqref{IKZformula}
\begin{equation}\label{IZconst}
	Z_N=\frac{(ab)^{N^2}}{(\prod_{k=0}^{N-1}k!)^2}\tau_N = \big(1-x^2\big)^{N^2}\prod_{k=0}^{N-1}\frac{h_k}{(k!)^2}
\end{equation}
and derive in the given situation from \eqref{theo1rewrite}
\begin{equation}\label{ZNalmost}
	Z_N=CF^{N^2}e^{\sum_{k=1}^{N-1}(b_k-\frac{1}{6k})},\ \ \ F=\frac{\pi(1-x^2)}{2\cos\frac{\pi x}{2}}
\end{equation}
with a positive, $N$ independent, constant $C$. Applying now Euler's summation formula, we have as $N\rightarrow\infty$
\begin{equation*}
	\sum_{k=1}^{N-1}\Big(b_k-\frac{1}{6k}\Big) = \frac{1}{12}\ln N+c_0+\frac{1}{2}\sum_{k=1}^{N-1}(-1)^k\cos\big(\pi x\Big(k+\frac{1}{2}\Big)\big)A(k)+O\big(N^{-1}\big),
\end{equation*}
with an $N$ independent term $c_0$. The sum can be further estimated using summation by parts: 
\begin{eqnarray}\label{byparts}
	\sum_{k=1}^{N-1}(-1)^k\cos\big(\pi x\Big(k+\frac{1}{2}\Big)\big)A(k) &=& A(N-1)S(N-1)\\
	&&-\sum_{k=1}^{N-2}\big(A(k+1)-A(k)\big)S(k)\nonumber
\end{eqnarray}
with
\begin{equation*}
	S(k) = \sum_{l=1}^k(-1)^l\cos\big(\pi x\Big(l+\frac{1}{2}\Big)\big) = \frac{(-1)^k\cos(\pi x(k+1))-\cos\pi x}{2\cos\frac{\pi}{2}x}.
\end{equation*}
Following the notation of Proposition \ref{prop1}, as $N\rightarrow\infty$
\begin{equation*}
	A(N+1) 
	=\frac{4}{\pi^2}\int\limits_0^{u(\varepsilon)}e^{-(N+1)u}\sin\big(Ny(u)\big)\frac{dy}{du}du +O\bigg(\frac{1}{N^2(\log N)^3}\bigg)
\end{equation*}
hence
\begin{eqnarray*}
	A(N+1)-A(N)&=&\frac{4}{\pi^2}\int\limits_0^{u(\varepsilon)}e^{-Nu}\Big(e^{-u}-1\Big)\sin\big(Ny(u)\big)\frac{dy}{du}du + O\bigg(\frac{1}{N^2(\log N)^3}\bigg)\\
	&=&O\bigg(\frac{1}{N^2(\log N)^2}\bigg),\ \ \ N\rightarrow\infty
\end{eqnarray*}
and therefore the series 
\begin{equation*}
	\sum_{k=1}^{\infty}\big(A(k+1)-A(k)\big)S(k)
\end{equation*}
is absolutely and uniformly convergent on any compact subset of the set \eqref{excset}. Back to \eqref{byparts} using Proposition \ref{prop1}
\begin{eqnarray*}
	&&\sum_{k=1}^{N-1}(-1)^k\cos\big(\pi x\Big(k+\frac{1}{2}\Big)\big)A(k) = A(N-1)S(N-1) +C_0\\
	&&+\sum_{k=N-1}^{\infty}\big(A(k+1)-A(k)\big)S(k) = C_0 +O\bigg(\frac{1}{N(\ln N)^2}\bigg),\ \ N\rightarrow\infty
\end{eqnarray*}
with an $N$ independent term $C_0$, thus
\begin{equation}\label{erroresti}
	\sum_{k=1}^{N-1}\Big(b_k-\frac{1}{6k}\Big) = \frac{1}{12}\ln N+\hat{C} +O\big(N^{-1}\big),\ \ N\rightarrow\infty.
\end{equation}
Back to \eqref{ZNalmost}, we get as $N\rightarrow\infty$
\begin{equation}\label{ZNfinal}
	Z_N=CF^{N^2}N^{\frac{1}{12}}\Big(1+O\big(N^{-1}\big)\Big).
\end{equation}
As mentioned before, the $x$ dependency of $C$ will be derived from the Toda equation \eqref{toda}. It implies
\begin{equation}\label{todasimp}
	\frac{d^2}{dx^2}\ln\tau_N = \frac{\tau_{N+1}\tau_{N-1}}{\tau_N^2} = \frac{h_N}{h_{N-1}}
\end{equation}
and we now use equation \eqref{hNalone}
\begin{eqnarray}\label{theo1result}
	h_N&=&\frac{\pi}{2}\bigg(\frac{2\pi N}{\cos\frac{\pi x}{2}}\bigg)^{2N+1}\frac{e^{-2n}}{16^n}\Bigg[1+\frac{1}{4N}+\frac{(-1)^N\cos(\pi x\big(N+\frac{1}{2}\big))}{2N(\ln N)^2}\bigg\{1-\frac{2\ln\ln N}{\ln N}\nonumber\\
	&&+\frac{c_0(x)}{\ln N}+O\Big(\bigg(\frac{\ln\ln N}{\ln N}\bigg)^2\Big)\bigg\}+\frac{c_1(x)}{N^2}+O\bigg(\frac{1}{N^2(\ln N)^2}\bigg)\Bigg],\ N\rightarrow\infty
\end{eqnarray}
uniformly on any compact subset of \eqref{excset}. Here
\begin{equation*}
	c_0(x)=1-2\gamma-4\ln 2-2\ln\Big(\cos\frac{\pi x}{2}\Big)
\end{equation*}
has already been computed, but not $c_1(x)$. Substituting \eqref{theo1result} into \eqref{todasimp}, we get
\begin{eqnarray}\label{conseq1}
	\frac{h_N}{h_{N-1}}&=&\bigg(\frac{\pi}{2\cos\frac{\pi x}{2}}\bigg)^2\frac{N(N-1)}{e^2}\bigg[\Big(1-\frac{1}{N}\Big)^N\bigg]^{-2}\Bigg[1+\frac{(-1)^N\cos\pi xN\ \cos\frac{\pi x}{2}}{N(\ln N)^2}\nonumber\\
	&&\times\bigg\{1-\frac{2\ln\ln N}{\ln N}+\frac{c_0(x)}{\ln N}+O\Big(\bigg(\frac{\ln\ln N}{\ln N}\bigg)^2\Big)\bigg\}\nonumber\\
	&&-\frac{1}{4N^2}+O\bigg(\frac{1}{N^2(\ln N)^2}\bigg)\Bigg],\ \ N\rightarrow\infty.
\end{eqnarray}
Since
\begin{equation*}
	\Big(1+\frac{x}{N}\Big)^N=e^x\Big(1-\frac{x^2}{2N}+\frac{3x^4+8x^3}{24N^2}+O\big(N^{-3}\big)\Big),\ \ N\rightarrow\infty
\end{equation*}
we can simplify \eqref{conseq1} further
\begin{eqnarray}\label{conseq2}
	\frac{h_N}{h_{N-1}}&=&\bigg(\frac{\pi N}{2\cos\frac{\pi x}{2}}\bigg)^2\Bigg[1+\frac{(-1)^n\cos\pi xN\ \cos\frac{\pi x}{2}}{N(\ln N)^2}\bigg\{1-\frac{2\ln\ln N}{\ln N}+\frac{c_0(x)}{\ln N}\nonumber\\
	&&+O\Big(\bigg(\frac{\ln\ln N}{\ln N}\bigg)^2\Big)\bigg\}
	-\frac{1}{12N^2}+O\bigg(\frac{1}{N^2(\ln N)^2}\bigg)\Bigg],\ \ N\rightarrow\infty.
\end{eqnarray}
At this point we notice that
\begin{equation*}
	\bigg(\frac{\pi}{2\cos\frac{\pi x}{2}}\bigg)^2=-\Big(\ln\cos\frac{\pi x}{2}\Big)''
\end{equation*}
as well as
\begin{equation*}
	\bigg(\frac{\pi}{2\cos\frac{\pi x}{2}}\bigg)^2\cos\pi xN\ \cos\frac{\pi x}{2}=\bigg(-\frac{\cos \pi xN}{4N^2\cos\frac{\pi x}{2}}+O\big(N^{-3}\big)\bigg)'',\ \ N\rightarrow\infty
\end{equation*}
which implies with \eqref{conseq2}
\begin{equation*}
	\frac{h_N}{h_{N-1}}=-\Big(\ln\cos\frac{\pi x}{2}\Big)''\Big(N^2-\frac{1}{12}\Big)+O\bigg(\frac{1}{(\ln N)^2}\bigg)
\end{equation*}
and the error term in this identity is uniform for any $x$ chosen from a compact subset of \eqref{excset}. Thus from \eqref{IZconst}
\begin{equation*}
	(\ln Z_N)''=N^2\bigg(\ln\frac{\pi(1-x^2)}{2\cos\frac{\pi x}{2}}\bigg)''+\bigg(\frac{1}{12}\ln N\bigg)''+\frac{1}{12}\Big(\ln\cos\frac{\pi x}{2}\Big)''+O\bigg(\frac{1}{(\ln N)^2}\bigg)
\end{equation*}
and by integration with respect to $x$
\begin{eqnarray}\label{conseq3}
	\ln Z_N &=& N^2\ln \frac{\pi(1-x^2)}{2\cos\frac{\pi x}{2}}+\frac{1}{12}\ln N+\frac{1}{12}\ln\cos\frac{\pi x}{2}+d_1(N)x+d_0(N)\nonumber\\
	&&+O\bigg(\frac{1}{(\ln N)^2}\bigg)
\end{eqnarray}
where $d_0$ and $d_1$ in general depend on $N$, but not on $x$. Substituting \eqref{ZNfinal} into \eqref{conseq3}, we obtain that
\begin{equation}\label{conseq4}
	\ln C = \frac{1}{12}\ln\cos\frac{\pi x}{2}+d_1(N)x+d_2(N)+O\bigg(\frac{1}{(\ln N)^2}\bigg)
\end{equation}
and $C$ does not depend on $N$. For any $x_1,x_2\in(-1,1)$ this implies
\begin{eqnarray*}
	\ln C(x_1)-\ln C(x_2) &=& \frac{1}{12}\ln\cos\frac{\pi x_1}{2}-\frac{1}{12}\ln\cos\frac{\pi x_2}{2}+d_1(N)(x_1-x_2)\\
	&&+O\bigg(\frac{1}{(\ln N)^2}\bigg),
\end{eqnarray*}
i.e. the limit
\begin{equation*}
		\lim_{N\rightarrow\infty}d_1(N) = \kappa_1
\end{equation*}
exists, hence also the limit
\begin{equation*}
	\lim_{N\rightarrow\infty}d_2(N) = \kappa_2
\end{equation*}
exists. By taking the limit $N\rightarrow\infty$ in \eqref{conseq4}, we obtain that
\begin{equation*}
	\ln C = \frac{1}{12}\ln\cos\frac{\pi x}{2}+\kappa_1x+\kappa_2
\end{equation*}
thus proving
\begin{proposition}\label{prop2} The constant $C$ in the asymptotic formula \eqref{ZNfinal} has the form
\begin{equation*}
	C=\bigg(\cos\frac{\pi x}{2}\bigg)^{\frac{1}{12}}e^{\kappa_1x+\kappa_2}.
\end{equation*}
\end{proposition}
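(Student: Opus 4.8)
\emph{Proof strategy.} The starting point is the coarse asymptotics $Z_N = CF^{N^2}N^{1/12}(1+O(N^{-1}))$ from \eqref{ZNfinal}, which already fixes the exponent of $N$ and shows that $C$ is independent of $N$, but leaves its dependence on $x$ undetermined. The plan is to recover that $x$-dependence from the Toda equation \eqref{toda}: via \eqref{connection} it gives $(\ln\tau_N)'' = \tau_{N+1}\tau_{N-1}/\tau_N^2 = h_N/h_{N-1}$, so the second $x$-derivative of $\ln Z_N$ is governed by the ratio of consecutive norming constants, for which the sharpened form \eqref{theo1result} of Theorem \ref{theo1} provides a sufficiently precise expansion.

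First I would insert \eqref{theo1result} into $h_N/h_{N-1}$ and simplify. Using $(1+x/N)^N = e^x(1 - x^2/(2N) + (3x^4+8x^3)/(24N^2) + O(N^{-3}))$, the elementary prefactors $e^{-2}/16$, $N(N-1)$ and the ratio of the $[(1-1/N)^N]$ factors collapse, leaving $(\pi N/(2\cos\frac{\pi x}{2}))^2$ times $1 - \frac{1}{12N^2} + (\text{oscillating}) + O(1/(N^2(\ln N)^2))$, as recorded in \eqref{conseq2}. Then two observations do the work: $(\pi/(2\cos\frac{\pi x}{2}))^2 = -(\ln\cos\frac{\pi x}{2})''$, and the oscillating term $(\pi/(2\cos\frac{\pi x}{2}))^2\cos(\pi x N)\cos\frac{\pi x}{2}$ is the second $x$-derivative of $-\cos(\pi x N)/(4N^2\cos\frac{\pi x}{2}) + O(N^{-3})$, an honestly $O(N^{-2})$ function. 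Consequently $h_N/h_{N-1} = -(\ln\cos\frac{\pi x}{2})''(N^2 - \frac{1}{12}) + O(1/(\ln N)^2)$, uniformly on compact subsets of \eqref{excset}.

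Feeding this into \eqref{IZconst} gives $(\ln Z_N)'' = N^2(\ln F)'' + \frac{1}{12}(\ln\cos\frac{\pi x}{2})'' + O(1/(\ln N)^2)$ with $F = \pi(1-x^2)/(2\cos\frac{\pi x}{2})$. Integrating twice in $x$ produces two constants of integration $d_1(N)x + d_0(N)$ that may depend on $N$ but not on $x$, so that $\ln Z_N = N^2\ln F + \frac{1}{12}\ln N + \frac{1}{12}\ln\cos\frac{\pi x}{2} + d_1(N)x + d_0(N) + O(1/(\ln N)^2)$. Comparing with \eqref{ZNfinal} and using that $C$ does not depend on $N$ yields $\ln C = \frac{1}{12}\ln\cos\frac{\pi x}{2} + d_1(N)x + d_2(N) + O(1/(\ln N)^2)$. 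Evaluating this identity at two distinct points $x_1 \ne x_2$ and subtracting shows that $d_1(N)$ has a finite limit $\kappa_1$ as $N\to\infty$; then $d_2(N)$ also converges, to some $\kappa_2$, and letting $N\to\infty$ gives $\ln C = \frac{1}{12}\ln\cos\frac{\pi x}{2} + \kappa_1 x + \kappa_2$, which is the asserted formula $C = (\cos\frac{\pi x}{2})^{1/12}e^{\kappa_1 x + \kappa_2}$.

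The hard part will be the uniform control of the oscillating contribution to $h_N/h_{N-1}$: it decays only like $1/(N(\ln N)^2)$ and oscillates on scale $1/N$ in $x$, so integrating it twice does not obviously keep it small. The resolution is exactly the identification of that term as the second $x$-derivative of an $O(N^{-2})$ function, so that antidifferentiating twice simply returns the $O(N^{-2})$ quantity, which is swallowed by the stated error. The remaining ingredients — the Stirling expansion for $h_N/(N!)^2$, the Euler summation leading to \eqref{erroresti}, and the elementary argument that the integration constants converge — are routine.
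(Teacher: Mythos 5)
Your proposal is correct and follows essentially the same route as the paper: the Toda identity $(\ln\tau_N)''=h_N/h_{N-1}$, the refined expansion \eqref{theo1result}, the two observations about second $x$-derivatives (including the key one that the oscillating term is the second derivative of an honestly $O(N^{-2})$ function), double integration producing $d_1(N)x+d_0(N)$, and the two-point comparison argument to show $d_1(N)\to\kappa_1$ and $d_2(N)\to\kappa_2$. There is nothing materially different from the paper's own proof.
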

We will now show that, in fact, $\kappa_1=0$. To this end recall the initial Izergin-Korepin formula \eqref{IKformula1} and \eqref{IKformula2}, in our situation
\begin{equation*}
	Z_N=\frac{(1-x^2)^{N^2}}{(\prod_{k=0}^{N-1}k!)^2}\tau_N,\ \ \ \tau_N=\det\bigg(\frac{d^{i+j-2}}{dx^{i+j-2}}\phi(x)\bigg)_{i,j=1}^N,\ \ \phi(x)=\frac{2}{1-x^2}.
\end{equation*}
It shows, that $Z_N$ as a function of $x$ is even, hence in the notation of Proposition \ref{prop2}, we conclude
\begin{equation*}
	\kappa_1=0,
\end{equation*}
thus proving Theorem \ref{theo2}.


\section{Phase transition - proof of Theorem \ref{theo3}}

We start with change of variables \eqref{newcoordinates}. It implies in the disordered phase region, via \eqref{Dpara},
\begin{equation*}
	\frac{\sin(\gamma-t)}{\sin(2\gamma)} = \frac{1-x}{2}+y,\qquad \frac{\sin(\gamma+t)}{\sin(2\gamma)}=\frac{1+x}{2}+y,
\end{equation*}
hence for $y>0$,
\begin{equation}\label{gtD}
	\sin\gamma = 2\sqrt{\frac{y(1+y)}{(1+2y)^2-x^2}}\,,\qquad \sin t=x\sin\gamma.
\end{equation}
On the other hand, in the antiferroelectric phase region, via \eqref{AFpara},
\begin{equation*}
	\frac{\sinh(\gamma-t)}{\sinh(2\gamma)}=\frac{1-x}{2}+y,\qquad \frac{\sinh(\gamma+t)}{\sinh(2\gamma)}=\frac{1+x}{2}+y,
\end{equation*}
hence for $y<0$,
\begin{equation}\label{gtAF}
	\sinh\gamma=2\sqrt{\frac{-y(1+y)}{(1+2y)^2-x^2}}\,,\qquad \sinh t=x\sinh\gamma.
\end{equation}
The functions $\sin z$ and $\sinh z$ are both entire, satisfying the usual relations,
\begin{equation*}
	\sin(-z)=-\sin z,\quad \sinh(-z)=-\sinh z,\quad \sinh z=-i\sin(iz).
\end{equation*}
This implies that the inverse functions, $\arcsin z$ and $\textnormal{arcsinh}\ z$, are analytic at $z=0$, and they satisfy
the relations,
\begin{equation*}
	\arcsin(-z)=-\arcsin z,\ \ \textnormal{arcsinh}(-z)=-\textnormal{arcsinh}\ z,\ \ \ \textnormal{arcsinh} z=-i\arcsin(iz).
\end{equation*}
Let us make the change of variable $y=k^2$ in the disordered phase  and $y=-k^2$ in the antiferroelectric phase, where $k>0$. 
Then \eqref{gtD} implies that
\begin{equation}\label{gammaD}
	\gamma=f_{D}(x,k) \equiv \arcsin\Bigg(2k\sqrt{\frac{1+k^2}{(1+2k^2)^2-x^2}}\Bigg),
\end{equation} 
while from \eqref{gtAF},
\begin{equation}\label{gammaAF}
	\gamma=f_{AF}(x,k)\equiv \textnormal{arcsinh}\Bigg(2k\sqrt{\frac{1-k^2}{(1-2k^2)^2-x^2}}\Bigg).
\end{equation}
Here the both functions, $f_D(x,k)$ and $f_{AF}(x,k)$, are analytic at $k=0$ for any $x\in(-1,1)$, satisfying
the relations,
\begin{equation*}
	f_D(x,-k)=-f_D(x,k),\quad f_{AF}(x,-k)=-f_{AF}(x,k),\quad f_{AF}(x,k)=-if_D(x,ik).
\end{equation*}
Now, from \eqref{gtD} and \eqref{gtAF} we have that
\begin{align}
	t&=g_D(x,k) \equiv \arcsin\big(2x\sin\gamma\big)=\arcsin\big(2x\sin f_D(x,k)\big)\label{gteqD}\\
	t&=g_{AF}(x,k)\equiv\textnormal{arcsinh}\big(2x\sinh\gamma\big)=\textnormal{arcsinh}\big(2x\sinh f_{AF}(x,k)\big).\label{gteqAF}
\end{align}
Observe that for any $x\in(-1,1)$, the functions $g_D(x,k),g_{AF}(x,k)$ are both analytic at $k=0$, satisfying
the relations,
\begin{equation*}
	g_D(x,-k)=-g(x,k),\quad g_{AF}(x,-k)=-g_{AF}(x,k),\quad g_{AF}(x,k)=-ig_D(x,ik).
\end{equation*}	
The four equations, \eqref{gammaD}, \eqref{gammaAF}, \eqref{gteqD}, and \eqref{gteqAF}, prove stated behavior 
\eqref{gt} of $\gamma$ and $t$ as $y\rightarrow 0$.

Next we go back to \eqref{energyD} and \eqref{energyAF} and employ change of variables \eqref{newcoordinates},
\begin{equation}
\begin{aligned}
	F_D(x,k^2) &= \frac{\pi ab}{c^2}\frac{\sin(2\gamma)}{2\gamma \cos\frac{\pi t}{2\gamma}},
\quad \gamma=f_D(x,k),\quad t=g_D(x,k)\\
	F_{AF}^{\rm reg}(x,k^2) &= \frac{\pi ab}{c^2}\frac{\sinh(2\gamma)}{2\gamma\cos\frac{\pi t}{2\gamma}},
\quad \gamma=f_{AF}(x,k),\quad t=g_{AF}(x,k).
\end{aligned}
\end{equation}
We first notice that the functions
\begin{equation*}
	h_D(x,k)\equiv\frac{\sin(2\gamma)}{2\gamma} = \frac{\sin 2f_D(x,k)}{2f_D(x,k)},\quad
 h_{AF}(x,k)\equiv \frac{\sinh 2\gamma}{2\gamma}=\frac{\sinh 2f_{AF}(x,k)}{2f_{AF}(x,k)}
\end{equation*}
are both analytic at $k=0$ and
\begin{equation}\label{rel1}
	h_D(-k,x)=h_D(x,k),\ \ \ h_{AF}(-k,x)=h_{AF}(x,k),\ \ \ h_{AF}(x,k)=h_D(ik,x).
\end{equation}
Similarly, the ratios
\begin{equation*}
	\frac{t}{\gamma} = \frac{g_D(x,k)}{f_D(x,k)},\qquad  \frac{t}{\gamma}=\frac{g_{AF}(x,k)}{f_{AF}(x,k)}
\end{equation*}
in the corresponding phase regions, are analytic at $k=0$ satisfying the same relations as in \eqref{rel1},
\begin{equation}\label{rel2}
	\frac{g_D(x,-k)}{f_D(x,-k)}=\frac{g_D(x,k)}{f_D(x,k)},\quad
 \frac{g_{AF}(x,-k)}{f_{AF}(x,-k)}=\frac{g_{AF}(x,k)}{f_{AF}(x,k)},\quad \frac{g_{AF}(x,k)}{f_{AF}(x,k)}=\frac{g_D(x,ik)}{f_D(x,ik)}.
\end{equation}
But this shows that the functions
\begin{equation*}
	r_D(x,k)\equiv \frac{\sin(2\gamma)}{2\gamma\cos\frac{\pi t}{2\gamma}},\qquad
 r_{AF}(x,k)\equiv\frac{\sinh(2\gamma)}{2\gamma\cos\frac{\pi t}{2\gamma}}
\end{equation*}	
are analytic at $k=0$ and they satisfy again the same relations,
\begin{equation*}
	r_D(x,-k)=r_D(x,k),\quad r_{AF}(x,-k)=r_{AF}(x,k),\quad r_{AF}(x,k)=r_D(x,ik),
\end{equation*}	
so in a neighborhood of $k=0$ we have the following convergent Taylor expansions:
\begin{equation}\label{taylor}
	r_D(x,k)=\sum_{j=0}^{\infty}r_j(x)k^{2j},\hspace{0.5cm}r_{AF}(x,k)=\sum_{j=0}^{\infty}(-1)^jr_j(x)k^{2j}.
\end{equation}
Substituting back $y$, and noticing that
\begin{equation*}
	\frac{\pi ab}{c^2} = \Big(\frac{1}{2}+y\Big)^2-\frac{x^2}{4}
\end{equation*}
is clearly analytic at $y=0$, we obtain
\begin{equation}\label{finalseries}
	F_D(x,y) = F_{AF}^{\textnormal{reg}}(x,y) = \sum_{j=0}^{\infty}f_j(x)y^j,\quad x\in(-1,1),
\end{equation}
with
\begin{equation*}
	f_0(x)=\frac{\pi(1-x^2)}{4\cos\frac{\pi x}{2}},\quad
 f_1(x)=\frac{\pi(\pi x^3\sin\frac{\pi x}{2}-\pi x\sin\frac{\pi x}{2}+8\cos\frac{\pi x}{2})}{12\cos^2\frac{\pi x}{2}}\,,
\end{equation*}
and the stated series in \eqref{finalseries} is convergent for small $y$.

Finally, to prove \eqref{singesti}, observe that by \eqref{jac2},
\begin{equation*}
	F^{\textnormal{sing}}_{AF}(\gamma,t)=O(q^2)=O\left(e^{-\frac{\pi^2}{\gamma}}\right).
\end{equation*}

\newpage

\newpage

\end{document}